\newcommand{\Date}[1]{\def\@Date{#1}}
\def\today{\number\day~\ifcase\month\or
 January\or February\or March\or April\or May\or June\or
 July\or August\or September\or October\or November\or December\fi~\number\year}
\def\be{\begin{equation}}
\def\ee{\end{equation}}
\def\bea{\begin{eqnarray}}
\def\eea{\end{eqnarray}}
\def\bd{\begin{displaymath}}
\def\ed{\end{displaymath}}
\def\bda{\begin{eqnarray*}}
\def\eda{\end{eqnarray*}}
\def\bsm{\begin{small}}
\def\esm{\end{small}}
\def\t0{\theta_0}
\def\ha1{\hat \beta_1}
\def\bnt{\begin{enumerate}}
\def\ent{\end{enumerate}}
\def\T{{ \mathrm{\scriptscriptstyle T} }}
\def\AS{A\"{\i}t-Sahalia}
\def\bsc{\begin{scriptsize}}
\def\esc{\end{scriptsize}}
\newtheorem{theorem}{Theorem}
\newtheorem{lemma}{Lemma}
\theoremstyle{definition}
\newcommand\independent{\protect\mathpalette{\protect\independenT}{\perp}}
\def\independenT#1#2{\mathrel{\rlap{$#1#2$}\mkern2mu{#1#2}}}
\newcommand{\figcaption}{\def\@captype{figure}\caption}
\newcommand{\tabcaption}{\def\@captype{table}\caption}
\newcommand{\cov}{{\rm Cov}}
\newcommand{\etal}{\mbox{\sl et al.\;}}
\newcommand{\eg}{\mbox{\sl e.g.\;}}
\newcommand{\bB}{{\mathbf B}}
\newcommand{\bI}{{\mathbf I}}
\newcommand{\bX}{{\mathbf X}}
\newcommand{\bY}{{\mathbf Y}}
\newcommand{\bZ}{{\mathbf Z}}
\newcommand{\bu}{{\mathbf u}}
\newcommand{\bw}{{\mathbf w}}
\newcommand{\bz}{{\mathbf z}}
\newcommand{\bbeta}  {\boldsymbol{\beta}}
\newcommand{\blambda}{\boldsymbol{\lambda}}
\newcommand{\bepsilonbb}{\boldsymbol{\epsilon}}
\newcommand{\bOmega}{\boldsymbol{\Omega}}
\newcommand{\bSigma}{\boldsymbol{\Sigma}}
\newcommand{\bDelta}{\boldsymbol{\Delta}}
\newcommand{\bgamma}{\boldsymbol{\gamma}}
\newcommand{\bPsi} {\boldsymbol{\Psi}}
\newcommand{\btheta} {\boldsymbol{\theta}}
\newcommand{\bmu} {\boldsymbol{\mu}}
\newcommand{\bGamma} {\boldsymbol{\Gamma}}
\newcommand{\E}{\rm E}
\newcommand{\V}{\rm Var}
\def\JRSSB{{\sl Journal of the Royal Statistical Society}, {\bf B}}
\def\JASA{{\sl Journal of the American Statistical Association}}
\def\AS{{\sl The Annals of Statistics}}
\renewcommand\arraystretch{1.1}
\begin{document}

\title{ \bf Sufficient Dimension Reduction for Classification}

\author{Xin Chen\footnote{Department of Statistics and Applied Probability, National University of Singapore, Singapore {\rm 117546}, Singapore; stacx@nus.edu.sg}, Jingjing Wu\footnote{Department of Mathematics and Statistics, University of Calgary, Calgary, Alberta, Canada; jinwu@ucalgary.ca}, Zhigang Yao\footnote{Department of Statistics and Applied Probability, National University of Singapore, Singapore {\rm 117546}, Singapore; zhigang.yao@nus.edu.sg}, Jia Zhang\footnote{Corresponding author. Department of Statistics, Southwestern University of Finance and Economics, Chengdu {\rm 611130}, China; jeanzhang9@2015.swufe.edu.cn}}

\maketitle

\begin{abstract}
We propose a new sufficient dimension reduction approach designed deliberately for high-dimensional classification. This novel method is named maximal mean variance (MMV), inspired by the mean variance index first proposed by \cite{Cui2015}, which measures the dependence between a categorical random variable with multiple classes and a continuous random variable. Our method requires reasonably mild restrictions on the predicting variables and keeps the model-free advantage without the need to estimate the link function. The consistency of the MMV estimator is established under regularity conditions for both fixed and diverging dimension ($p$) cases and the number of the response classes can also be allowed to diverge with the sample size $n$. We also construct the asymptotic normality for the estimator when the dimension of the predicting vector is fixed. Furthermore, 
our method works pretty well when $n < p$. The surprising classification efficiency gain of the proposed method is demonstrated by simulation studies and real data analysis.

\end{abstract}

\begin{quote}
\noindent
{\sl Keywords}: Classification; Mean variance index; Sufficient dimension reduction; Consistency. \end{quote}

\begin{quote}
\noindent
{\sl MSC2010 subject classifications}: Primary 62H30; secondary 62G05
\end{quote}

\thispagestyle{empty}
\pagenumbering{gobble}

\newpage
\pagenumbering{arabic}

\setcounter{page}{1}

\section{Introduction}
\label{s1}

Sufficient dimension reduction fits into what is currently quite a hot area in research of high dimensional data. Large quantities of related articles and studies have appeared in recent decades. However, most of the literature focuses on the regression problem where the responsible variable $Y$ is a continuous scalar, while little is designed specially for the problem of classification with a categorical response. The slice-based methods, including but not limited to the seminal sliced inverse regression (SIR; \cite{Li1991}), sliced average variance estimation (SAVE; \cite{CookWeisberg1991}), directional regression (DR; \cite{LiWang2007}) and sliced regression (SR; \cite{WangXia2008}), can be naturally applied to the classification problem with the slices determined directly by the classes of the response. It seems to work nicely but the number of the slices is restricted by the number of the classes which can be problematic when there are only a few categories. More specifically, when the response is a binary variable, the number of the slices is imposed as $2$ and the number of effective dimension-reduction directions is correspondingly forced to be $1$, which would directly reduce the accuracy of classification. Moreover, almost all of the above methods require the linearity condition or constant covariance condition, or both, which are difficult to verify in practice, and the results may be misleading if these conditions are violated. Other popular sufficient dimension reduction methods, like minimum average variance estimation (MAVE; \cite{Xia2002}), inverse regression (IR; \cite{CookNi2005}) and distance covariance based sufficient dimension reduction (DC; \cite{ShengYin2013}, \cite{ShengYin2016}) either require the responsible variable to be continuous or treat the response as a numeric variable. Therefore, from the perspective of sufficient dimension reduction, it's necessary to do some work deliberately into the case of the categorical responsible variable or in other words, the classification.

More importantly, from the perspective of classification itself, dimension reduction is of importance for constantly emerging high-dimensional classification problems considering the ``curse of dimensionality'' which appears in most classification approaches, such as the frequently used linear and quadratic classifiers, support vectors machines, k-nearest neighbours, decision trees, neural networks and new methods like distance weighted discrimination (\cite{Marron2007}, \cite{Marron2015} and \cite{WangZou2018}) and so on. The usual practice is to conduct projection or variable selection to reduce dimensionality as a first step. The projection methods have been widely applied to classification for the gene expression data. Related research includes but is not limited to principle component analysis in \cite{Zou2006}, \cite{Bair2006} and \cite{Shao2014}; sliced inverse regression in \cite{BuraPfeiffer2003} and \cite{Antoniadis2003}; and partial least squares in \cite{Boulesteix2004}. Of note is the fact that sufficient dimension reduction is indeed a kind of projection methods where all the information related to classification is preserved. Variable selection is another line of dimension reduction approach. It can deal with classification problems with extremely large dimensionality. See, for example, the nearest shrunken centroids method in \cite{Tibshirani2002}, the features annealed independence rules in \cite{FanFan2008} and the latest research, including \cite{AndrewsMcNicholas2014}, \cite{Stefanski2014}, and \cite{PartoviDavison2015}, among others. Furthermore, \cite{Cui2015} used a novel mean variance (MV) index to implement model-free feature screening for high dimensional discriminant analysis, which is related closely to the proposed method.

In addition to the classification methods aforementioned, there is also a great deal of other research work on classification. To name but a few, \cite{DettlingBuhlmann2003} and \cite{BuhlmannYu2003} studied boosting under logit loss, and $L_2$ loss in the high-dimensional setting, respectively; \cite{GreenshteinRitov2004} introduced the concept of persistence, which is weaker than consistency and pays more attention to misclassification error; \cite{DonohoJin2008} employed higher criticism thresholding for feature screening when the useful features are both rare and weak; and following their work, \cite{Fan2013} proposed a two-stage classification procedure based on innovated thresholding and high criticism thresholding in the sparse Gaussian graphical model.

In this article, we propose a novel sufficient dimension reduction approach -- maximum mean variance(MMV), designed deliberately for a high-dimensional classification problem based on the mean variance index first proposed by \cite{Cui2015}. This method is not slice-based, thus it circumvents the restriction on slice number. Moreover, the approach does not require the linearity condition or constant variance condition, nor does it require any special distributions on the predicting vector $\bX$, $\bX|Y$ or $Y|\bX$, which is essential in the methods of \cite{ZhuZeng2006}, \cite{CookForzani2009}, \cite{CookLi2009}, \cite{BuraForzani2015}, \cite{Bura2016} and \cite{Zhang2018}. In addition, our method keeps the model-free advantage without the need to estimate the link function. These advantages broaden the scope of the application of our method. The consistency of the MMV estimator is established for both fixed and diverging dimension cases and asymptotic normality is constructed for the case of fixed dimension. The relationship between MMV and classification is more than the usual stepwise heuristics of dimension reduction first and classification next, which is elaborated upon by taking the examples of linear discriminant analysis (LDA) and index models. Moreover, the asymptotic theory of MMV estimator is quite challenging to set up, because the empirical MV index includes the kernel estimation of conditional and unconditional distribution functions, thus it can not be directly expressed by the sum of independent and identically distributed random variables.

The rest of the paper is organised as follows. In Section \ref{s2}, we propose a new sufficient dimension reduction approach (MMV) to high-dimensional classification. In Section \ref{s4}, we elaborate on the delicate relationship between the MMV method and classification. Consistency and the asymptotic normality of the MMV estimator are studied in Section \ref{s3}. Several simulation studies together with numerical comparisons and a real data example are conducted to illustrate the efficiency and priority of the proposed method in Section \ref{s5}. Section \ref{s6} concludes the article and the technical proofs are deferred to the Appendix.

\setcounter{equation}{0}
\section{Method}
\label{s2}

\subsection{Mean variance index}

Let $Y$ be a categorical response with $R$ classes $\{y_1, y_2, \dots, y_R\}$ and $Z$ be a continuous covariate. The Mean Variance index (\cite{Cui2015}) is defined by:
\be \label{111}
MV(Z|Y)={\E}_{Z}[{\V}_{Y}(F(Z|Y))]=\sum_{r=1}^{R} p_r \int [F_r(z)-F(z)]^2dF(z)
\ee
where $F(z|Y)= \mathbb{P}(Z \le z|Y)$, $F(z)=\mathbb{P}(Z \le z)$, $F_{r}(z)=\mathbb{P} (Z \le z|Y=y_r)$ and $p_r=\mathbb{P}(Y=y_r)$ for $r=1, \dots, R$. It has been verified that $MV(Z|Y)=0$ if and only if $Y$ and $Z$ are independent. It is worth noting that the MV index characterizes both linear and nonlinear correlations between categorical variable $Y$ and continuous variable $Z$.

Let $\{(Y_i, Z_i): 1 \le i \le n\}$ be an i.i.d random sample of size $n$. Let $\hat{F}(Z)$ and $\hat{F}_r(Z)$ be some sample estimations of $F(Z)$ and $F_r(Z)$. The MV index can be estimated by:
\be \label{mv}
\widehat{MV}(Z|Y):=MV_n(Z|Y)=\frac{1}{n} \sum_{r=1}^{R} \sum_{i=1}^{n} \hat{p}_{r}[\hat{F}(Z_i)-\hat{F}_r(Z_i)]^2
\ee
where $\hat{p}_{r}=1/n \sum_{i=1}^{n} \mathrm{I}\{Y_i=y_r\}$ with $\mathrm{I}(\cdot)$ being the indicator function. \cite{Cui2015} used the empirical distributions of $Z$ and $Z|Y$ as their sample estimators in a screening procedure.

\subsection{Maximum Mean Variance method for sufficient dimension reduction}
Classification is a crucial statistical problem which has attracted interest for decades. Let $Y$ be the categorial response defined above and $\bX\in \mathbb{R}^p$ be continuous predictors. 
In this paper, 
we propose a novel sufficient reduction approach designed deliberately for high-dimensional classification problems based on the mean variance index. The idea is to find a few linear combinations (or indexes) of original predictors that 
contribute to classification without a loss of information, so that these derived indexes can then be utilized for classification. This is achieved through a stepwise maximization procedure of the MV index. Recall that $MV(Z|Y)=0$ if and only if $Z$ and $Y$ are statistically independent. Thanks to this property, the MV index is used for marginal feature screening in discriminant analysis (\cite{Cui2015}). Our novel idea is to abandon this and, on the contrary, we seek a $\bbeta \in \mathbb{R}^p$ such that $MV(\bbeta^\T \bX|Y)$ achieves its maximum under some constraints. This is why we named this method Maximum Mean Variance. Hereafter, we refer to the MMV together with the following classification approach as "MMV+.". A sequential algorithm is elaborated as follows: we find the first linear combination of the predictors from
\begin{equation*}
\bbeta_{01}=\arg \max_{\bbeta_1} MV(\bbeta_1^\T\bX|Y)
\end{equation*}
subject to $\bbeta_1^\T \bbeta_1 = 1$. Then the $k$th linear combination can be calculated from

\begin{equation}\label{BetaPop}
\bbeta_{0k}=\arg \max_{\bbeta_k} MV(\bbeta_k^\T\bX|Y)
\end{equation}
subject to $\bbeta_k^\T \bbeta_k = 1$ and $[\bbeta_{01},\cdots,\bbeta_{0(k-1)}]^\T \bbeta_k = \mathbf{0}$. 
We continue this process till the MV index reaches $0$. This procedure is indeed conducting sufficient dimension reduction, which can be seen clearly from the following theorem.

\begin{theorem}\label{t1}
Suppose there exists a positive integer $d<p$ such that $MV(\bbeta_{01}^{\T}\bX|Y)\ge MV(\bbeta_{02}^{\T}\bX|Y) \ge \dots \ge MV(\bbeta_{0d}^{\T}\bX|Y)>0=MV(\bbeta_{0(d+1)}^{\T}\bX|Y)=\dots=MV(\bbeta_{0p}^{\T}\bX|Y)$ where $\bbeta_{0i}^{\T}\bbeta_{0i}=1$ and $\bbeta_{0i}^{\T}\bbeta_{0j}=0$ for $i,j=1,\dots,p$ and $i \ne j$. Then
\[
S_{Y|\bX}\subseteq {\rm span}\{\bbeta_{01},\dots,\bbeta_{0d}\},
\]
where $S_{Y|\bX}$ denotes the central dimension reduction subspace (\cite{Cook1994}, \cite{Cook1996}) and for any $k<d$,
\[
{\rm span}\{\bbeta_{01},\dots,\bbeta_{0k}\}\nsupseteq S_{Y|\bX}.
\]
\end{theorem}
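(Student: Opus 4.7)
The backbone of the argument is the characterization established in \cite{Cui2015} that $MV(Z|Y)=0$ holds if and only if $Z$ and $Y$ are independent. The plan is to exploit this at every step where an MV value equals zero. First, since $\bbeta_{0(d+1)}$ is by construction the maximizer of $MV(\bbeta^{\T}\bX|Y)$ over unit $\bbeta$ orthogonal to $\spa\{\bbeta_{01},\dots,\bbeta_{0d}\}$, and MV is nonnegative, the hypothesis $MV(\bbeta_{0(d+1)}^{\T}\bX|Y)=0$ forces $MV(\bbeta^{\T}\bX|Y)=0$ for every unit $\bbeta$ in that orthogonal complement. Cui's equivalence then yields marginal independence $\bbeta^{\T}\bX\perp Y$ for each such direction.

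Next I would bootstrap one-dimensional marginal independence into a joint statement. Writing $\bB=[\bbeta_{01},\dots,\bbeta_{0d}]$ and $\bC=[\bbeta_{0(d+1)},\dots,\bbeta_{0p}]$, every unit $\bgamma\in\mathbb{R}^{p-d}$ gives $(\bC\bgamma)^{\T}\bX=\bgamma^{\T}\bC^{\T}\bX$, which is marginally independent of $Y$ by the previous step. A characteristic-function/Cram\'er--Wold argument then upgrades this to the full marginal independence $\bC^{\T}\bX\perp Y$. Combined with the orthonormal decomposition $\bX=\bB\bB^{\T}\bX+\bC\bC^{\T}\bX$, this is translated into $Y\perp\bX\mid\bB^{\T}\bX$, which is exactly the condition making $\spa\{\bbeta_{01},\dots,\bbeta_{0d}\}$ a dimension-reduction subspace; the inclusion $S_{Y|\bX}\subseteq\spa\{\bbeta_{01},\dots,\bbeta_{0d}\}$ then follows from the minimality in Cook's definition of the central subspace.

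The second claim is handled by contradiction. Fix $k<d$ and suppose $S_{Y|\bX}\subseteq\spa\{\bbeta_{01},\dots,\bbeta_{0k}\}$, so that $Y\perp\bX\mid\bB_{k}^{\T}\bX$ with $\bB_{k}=[\bbeta_{01},\dots,\bbeta_{0k}]$. Since $\bbeta_{0(k+1)}$ is orthogonal to $\spa\{\bbeta_{01},\dots,\bbeta_{0k}\}$ by construction, running the argument of the first two paragraphs in reverse would deduce $\bbeta_{0(k+1)}^{\T}\bX\perp Y$ marginally and hence $MV(\bbeta_{0(k+1)}^{\T}\bX|Y)=0$, in direct contradiction with $MV(\bbeta_{0(k+1)}^{\T}\bX|Y)>0$ (valid because $k+1\le d$).

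The main technical obstacle I anticipate is the step passing from joint marginal independence $\bC^{\T}\bX\perp Y$ to the conditional statement $Y\perp\bX\mid\bB^{\T}\bX$, because marginal independence does not in general imply conditional independence under an arbitrary joint law of $\bX$. The formal proof will need some additional structural input at this point---for example a coverage-type condition asserting that every nonzero direction in $S_{Y|\bX}$ produces strictly positive MV, or a regularity assumption on the support of $\bX$ that permits a clean invocation of Cook's minimality. The same gap, in its converse form, underlies the reversal used in the minimality argument, so I would expect both to be addressed together.
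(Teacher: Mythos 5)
Your proposal follows the same route as the paper's own proof---both rest entirely on the characterization $MV(Z|Y)=0\Leftrightarrow Z\independent Y$---but you execute the first half more carefully than the paper does. Your observation that the sequential-argmax structure forces $MV(\bbeta^{\T}\bX|Y)=0$ for \emph{every} unit vector in $\spa\{\bbeta_{01},\dots,\bbeta_{0d}\}^{\perp}$ (not just for the listed $\bbeta_{0(d+1)},\dots,\bbeta_{0p}$), and your Cram\'er--Wold upgrade of direction-wise independence to the joint statement $\bC^{\T}\bX\independent Y$ (valid here precisely because $Y$ is categorical, so matching every one-dimensional conditional characteristic function matches the conditional laws), are genuine strengthenings that the paper's argument never states. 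The paper instead passes directly from ``$\bbeta_{0i}^{\T}\bX\independent Y$'' to ``$\bbeta_{0i}\in S_{Y|\bX}^{\perp}$'' and from ``$\bbeta_{0i}^{\T}\bX\not\independent Y$'' to ``$\bbeta_{0i}\notin S_{Y|\bX}^{\perp}$''.

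The obstacle you flag at the end is real, and it is exactly the step the paper asserts without justification. Membership of $\bbeta$ in $S_{Y|\bX}^{\perp}$ is neither implied by nor implies marginal independence of $\bbeta^{\T}\bX$ and $Y$ without extra structure: with $X_1,X_2$ i.i.d.\ symmetric and $Y=\mathrm{sgn}(X_1X_2)$ one has $X_1\independent Y$ yet $e_1\in S_{Y|\bX}$; with correlated $(X_1,X_2)$ and $Y=\mathrm{I}(X_1>0)$ one has $e_2\in S_{Y|\bX}^{\perp}$ yet $X_2\not\independent Y$. Your reverse step in the $k<d$ argument needs the same missing ingredient: from $\bbeta_{0(k+1)}\perp S_{Y|\bX}$ one can conclude $\bbeta_{0(k+1)}^{\T}\bX\independent Y$ only after also securing that $\bbeta_{0(k+1)}^{\T}\bX$ is independent of the projection of $\bX$ onto $S_{Y|\bX}$---this is the route via Dawid's Lemma 4.3 and Cook's Proposition 4.6 that the paper itself takes in the proof of Theorem \ref{t5}, where normality of $\bX$ supplies that independence. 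So your diagnosis is accurate: as written, neither your argument nor the paper's closes the passage between marginal and conditional independence, and some coverage or linearity-type condition (or a reformulation of the target space in terms of marginal independence) is what is actually required to make the statement airtight.
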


Note that the subscript $0$ in $\bbeta_{0i}$'s in Theorem \ref{t1} is used to indicate that these variables are in the population level. The existence of $d<p$ is validated in the classical linear discriminant analysis (LDA) and in the index model setting. See more details in Section \ref{s4}.

In practice, we don't know the population MV index for any given $\bbeta$ and have to use $\widehat{MV}$ to replace it in the sequential optimization procedure. It is natural to estimate $F(z)$ in (\ref{111})by its empirical counterpart: $\hat{F}(z)=n^{-1}\sum_{i=1}^n \mathrm{I}(Z_i\leq z)$, as is done in \cite{Cui2015}. However, the empirical distribution is a step function which makes the optimization algorithm problematic. Instead, we use
\begin{equation*}
\hat{F}(z):=\hat{F}_h(z)=\int_{-\infty}^{z} \hat{f}_{h}(u)\mathrm{d}u=n^{-1}\sum_{i=1}^n \int_{-\infty}^{z} K_{h}(Z_i-u)\mathrm{d}u
\end{equation*}
where $\hat{f}_h$ is a kernel density estimator of $f$ and $K_h(\cdot)=1/hK(\cdot/h)$ with $K$ being a kernel function and $h=h_n$ the bandwidth converging to 0 as $n\to\infty$. Similarly, $\hat{F}_r(z)$ can be estimated by
\[
\hat{F}_{r}(z):=\hat{F}_{hr}(z)= n_r^{-1}\sum_{j=1}^{n_r} \int_{-\infty}^{z} K_{h_r}(Z_j-u)\mathrm{d}u
\]
for $r=1, 2, \dots, R$, where $n_r$ is the sample size of the $r$th category and $Z_j$, $j=1, \dots, n_r$, are the sample points in this category. Then, we can use the estimator $\widehat{MV}$ given in $(\ref{mv})$ together with $\hat{F}_h(z)$ and $\hat{F}_{hr}(z)$ to implement the optimization algorithm. Note that $\bbeta^\T\bX$ is of one dimension, $\widehat{MV}(\bbeta):=\widehat{MV}(\bbeta^\T\bX|Y)$ can be estimated by the approach given above.



\setcounter{equation}{0}
\section{MMV in classification}
\label{s4}

\subsection{Fisher's LDA}

Consider a two-class classification problem. Suppose we have  $n$ labeled i.i.d. training samples $(Y_i,\bX_i), 1 \leq i \leq n$, where $\bX_i$ is a  $p$-dimensional feature vector and $Y_i  \in \{-1, 1\} $ is the corresponding class label. Let $p_1=\mathbb{P}(Y_i=1)$, $p_{-1}=\mathbb{P}(Y_i=-1)$ and assume
\begin{equation} \label{model}
\bX_i \sim N(Y_i \cdot \bmu, \; \bSigma),
\end{equation}
where $\bmu$ is the contrast mean vector between the two classes, and $\bSigma$ is the $p \times p$ covariance matrix. Given a new independent feature vector from the same population, i.e. $\bX \sim N(Y \cdot \bmu,  \,  \bSigma)$, our goal is to train  $(\bX_i, Y_i)$ to decide whether $Y = -1$ or $Y  = 1$. Here we use the contrast mean in model \eqref{model}, but the method and result below also applies to a more general model with mean vectors $\bmu_1, \bmu_2 \in \mathbb{R}^p$ with no extra difficulty.

Linear Discriminant Analysis (LDA), namely Fisher's LDA, is a well-known method for classification, which is essentially based on a weighted average of the test features $L(\bX) =  \sum_{j = 1}^p  w(j) X(j)$ and predicts  $Y =  \pm 1$ if $L(\bX) > <  0$. Here,  $\bw = (w_1, \ldots, w_p)^\T$ is a preselected weight vector. Fisher showed that the optimal weight vector satisfies
\begin{equation*} 
\bw \propto \bOmega  \bmu.
\end{equation*}
where $\bOmega=\bSigma^{-1}$. In the classical setting  where  $n \gg p$, $\mu$ and $\Omega$ can be conveniently estimated and  Fisher's LDA is  approachable. Unfortunately, in the modern regime where $p \gg n$,  Fisher's LDA faces immediate challenges. To bypass the difficulty of estimating $\Omega$ in LDA, we propose a classifier based on the transformed variables $\bbeta_{0k}^\T \bX$,  $1 \leq k \leq d$, where $\bbeta_{0k}$s are the optimizers of the MMV procedure. The rationale of proposing this classifier is the intimate relationship given in the next theorem between the MMV method and the LDA.

\begin{theorem}\label{t6}
Under model (\ref{model}), $d=1$ and $\bbeta_{01} \propto \bOmega\bmu$, where $\bbeta_{01}$ is defined in (\ref{BetaPop}). 
\end{theorem}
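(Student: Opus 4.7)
The plan is to exploit the Gaussian structure of model \eqref{model} to reduce the MMV optimization to a classical generalized Rayleigh quotient, and then use the one-dimensionality of the LDA central subspace to conclude $d=1$. Under \eqref{model}, for any $\bbeta\in\mathbb{R}^p$ one has $\bbeta^\T\bX\mid Y=y \sim N(y\bbeta^\T\bmu,\,\bbeta^\T\bSigma\bbeta)$. Writing $F(z)=p_1F_1(z)+p_{-1}F_{-1}(z)$ gives $F_1(z)-F(z)=p_{-1}[F_1(z)-F_{-1}(z)]$ and $F_{-1}(z)-F(z)=-p_1[F_1(z)-F_{-1}(z)]$, so definition \eqref{111} collapses to
\[
MV(\bbeta^\T\bX\mid Y)\;=\;p_1p_{-1}\int[F_1(z)-F_{-1}(z)]^2\,dF(z).
\]

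A linear change of variable $u=z/\sqrt{\bbeta^\T\bSigma\bbeta}$ then shows this integral depends on $\bbeta$ only through the standardized signal-to-noise ratio $\rho_\bbeta := \bbeta^\T\bmu/\sqrt{\bbeta^\T\bSigma\bbeta}$: explicitly $MV(\bbeta^\T\bX\mid Y)=p_1p_{-1}\,g(\rho_\bbeta)$ with
\[
g(\rho)=\int[\Phi(u-\rho)-\Phi(u+\rho)]^2\bigl[p_1\phi(u-\rho)+p_{-1}\phi(u+\rho)\bigr]\,du.
\]
The symmetry $u\mapsto -u$ combined with $\Phi(-x)=1-\Phi(x)$ shows $g$ is even. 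I would then establish that $g$ is strictly increasing on $[0,\infty)$ by differentiating under the integral: the pointwise identity $\partial_\rho[\Phi(u-\rho)-\Phi(u+\rho)] = -[\phi(u-\rho)+\phi(u+\rho)]<0$ makes the ``differentiate-the-integrand'' contribution strictly positive, while the ``differentiate-the-measure'' contribution can be grouped into a non-negative quantity using the $u\mapsto -u$ symmetry together with the interchange $(p_1,p_{-1})\leftrightarrow(p_{-1},p_1)$. Granted this monotonicity, the constrained maximization of $MV(\bbeta^\T\bX\mid Y)$ over $\bbeta^\T\bbeta=1$ reduces to maximizing the scale-invariant ratio $\rho_\bbeta^2=(\bbeta^\T\bmu)^2/(\bbeta^\T\bSigma\bbeta)$; the classical generalized-eigenvalue calculation yields maximum $\bmu^\T\bOmega\bmu$ attained on the direction $\bSigma^{-1}\bmu=\bOmega\bmu$, and hence $\bbeta_{01}\propto\bOmega\bmu$.

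For the assertion $d=1$, I would invoke the standard LDA fact that under \eqref{model} Bayes' rule makes the conditional law of $Y$ given $\bX$ depend on $\bX$ only through the single linear combination $\bmu^\T\bOmega\bX$; hence $Y\independent\bX\mid\bmu^\T\bOmega\bX$ and $S_{Y|\bX}\subseteq\spa\{\bOmega\bmu\}$, with equality because $\bmu\ne\mathbf{0}$. Combined with the identification $\bbeta_{01}\propto\bOmega\bmu$ from the previous step, this yields $d=1$ as claimed. The main technical obstacle is the rigorous proof of strict monotonicity of $g$: because the integrating measure $dF$ itself varies with $\rho$, the derivative $g'(\rho)$ decomposes into competing positive and negative contributions whose combined sign is not immediate, and pinning it down requires the careful symmetric pairing sketched above; a cleaner alternative would be to recognize $p_1p_{-1}\,g(\rho)$ as a known divergence (for example, a weighted Cram\'er--von Mises distance between the two conditional laws) whose monotonicity in the location-separation parameter is standard.
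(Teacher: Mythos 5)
Your treatment of the direction $\bbeta_{01}$ follows the paper's route almost exactly: collapse $MV$ to $p_1p_{-1}\int[F_1-F_{-1}]^2\,dF$, standardize, observe that the index depends on $\bbeta$ only through $\rho_\bbeta=\bbeta^\T\bmu/\sqrt{\bbeta^\T\bSigma\bbeta}$, and finish with the generalized Rayleigh quotient. The monotonicity step you flag as the main obstacle is actually easier than your sketch suggests: split $dF=p_1\,dF_1+p_{-1}\,dF_{-1}$ and standardize each piece separately (as the paper does), and both pieces reduce to $\int[\Phi(t)-\Phi(t+2\rho)]^2\,d\Phi(t)$, whose integrating measure no longer depends on $\rho$; its derivative in $\rho$ equals $-4\int[\Phi(t)-\Phi(t+2\rho)]\,\phi(t+2\rho)\phi(t)\,dt>0$ for $\rho>0$ because the bracket is negative. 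There is no competition between ``differentiate the integrand'' and ``differentiate the measure'' once this substitution is made.

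The genuine gap is in your argument for $d=1$. You infer $d=1$ from $\dim S_{Y|\bX}=1$, but $d$ in Theorem \ref{t1} is the number of MMV steps with strictly positive MV index, and this can exceed $\dim S_{Y|\bX}$: the paper's own Theorem \ref{t5} only guarantees $d\le 2k$ for Gaussian covariates with general $\bSigma$, with $d=k$ assured only when $\bSigma=\bI$. Concretely, under model (\ref{model}) one has $MV(\bbeta^\T\bX|Y)=0$ if and only if $\bbeta^\T\bmu=0$ (the two conditional laws of $\bbeta^\T\bX$ share a variance and differ only in mean), while the second MMV step ranges over unit vectors $\bbeta$ with $\bbeta^\T\bOmega\bmu=0$. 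Orthogonality to $\bOmega\bmu$ does not imply orthogonality to $\bmu$ unless $\bmu$ is an eigenvector of $\bSigma$, so ``every admissible second-step direction has zero MV'' is precisely what must be proved, and it is not delivered by the identity $S_{Y|\bX}=\spa\{\bOmega\bmu\}$; the missing ingredient in the Dawid/Cook conditional-independence route is $\bbeta^\T\bX\independent\bmu^\T\bOmega\bX$, which for Gaussian $\bX$ requires $\bbeta^\T\bSigma\bOmega\bmu=\bbeta^\T\bmu=0$ rather than $\bbeta^\T\bOmega\bmu=0$. The paper instead closes this step by a direct stationarity analysis of the second-stage Lagrangian $(\bbeta^\T\bSigma\bbeta)^{-1/2}\bbeta^\T\bmu+\lambda(\bbeta^\T\bbeta-1)+\pi\,\bbeta^\T\bSigma^{-1}\bmu$, arguing that any maximizer satisfies $\bbeta_{02}^\T\bmu=0$ and hence has zero MV. Some version of that computation (or an explicit additional hypothesis relating $\bmu$ and $\bSigma$) is needed to complete your proof of $d=1$.
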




In practice, if we have an estimator of $MV(Z|Y)$, say $\widehat{MV}(Z|Y)$ given in (\ref{mv}), then we can find the maximizer, denoted as $\hat\bbeta_1$, of $\widehat{MV}(Z|Y)$. For a new given feature vector $\bX$, we classify it as $Y=1$ if ${\hat\bbeta_1}^\top \bX>0$ and $Y=-1$ if ${\hat\bbeta_1}^\top \bX<0$. Since the $\hat\bbeta_1$ is an estimator of $\bbeta_{01}$, by Theorem 2 it is also an estimator of  $\bOmega\bmu$. This classifier has the following benefits.

\begin{enumerate}

\item It does not have to estimate the precision matrix $\bOmega$ and can be implemented efficiently in the case where $n < p$.

\item It requires minimal conditions. Given $Y$, the transformed variables ${\bbeta}_{01}^\T\bX, \cdots,{\bbeta}_{0d}^\T\bX$ have distributions closer to normal than the original variable $\bX$ when $d\ll p$ (\cite{HallLi1993}). It is well known that the conditional normality is an important assumption for linear discriminant analysis.

\item It has better performance in classification than the traditional LDA. We will show in our simulation study that it reduces the classification error significantly compared with the LDA.

\end{enumerate}

This theorem states that for model (\ref{model}), the true $d=1$ and $\bbeta_{01}^\T \bX$ contains all the information for classification. This means at the population level (assuming population functions are all known), the MMV procedure gives exactly the LDA classifier. The theorem also justifies the efficiency of the LDA for the normal model (\ref{model}) in terms of maximum mean variance. When $p>n$, the LDA needs to estimate the inverse of the covariance matrix; thus it is unsolvable or requires extra sparsity assumption. The MMV is an efficient alternative to circumvent this problem. 

In the field of high dimensional classification, we are usually faced with rare and weak signals, i.e. important features are sparse and each contributes weakly to the classification decision. In this setting, reasonable classification becomes quite difficult. If rare and weak signals imply sparse optimal weights, the proposed MMV estimator can be adjusted by adding a penalty term to the objective function or by some kinds of thresholding. We conjecture that higher criticism thresholding (HCT) might also be applied in this context. It has been verified that HCT performs quite well theoretically and practically when the signals are rare and weak; see \cite{DonohoJin2008} and \cite{Fan2013} for details. Hence, the extension of HCT to the MMV estimator seems rather direct and reasonable. Nevertheless, a theoretical investigation of this would be very complicated, and we leave it for further research.

\subsection{Index model}

The index model enjoys a lot of popularity in regression and classification. The logistic model and probit model are special cases where the link function is known with a single index. A general index model can be expressed as the following semi-parametric model. Let $Y \in \mathbb{R}$ denote the response variable and $\bX \in \mathbb{R}^p$ denote the covariates. Assume there exist orthogonal $p$-dimensional vectors $\bbeta_1, \dots, \bbeta_k$ with unit norm such that
\be \label{e11}
Y=f(\bbeta_1^\T\bX,\dots,\bbeta_k^\T\bX, \varepsilon) \quad (k<p),
\ee
where $f$ is an arbitrary unknown link function and $\varepsilon$ is independent of $\bX$. With a bit of an abuse of the notation, the notation $k$ in (\ref{e11}) can be seen as a fixed integer indicating the number of the indexes. The column space spanned by $\{\bbeta_1, \dots, \bbeta_k\}$ is defined as the central dimension reduction subspace by \cite{Cook1994} and \cite{Cook1998}. Under the setting of index model (\ref{e11}), we can detail Theorem \ref{t1} to some extent. Let $\bB=(\bbeta_1,\dots,\bbeta_k)^\T$. Assume
\be \label{e8}
Y \independent \bX|\bB \bX
\ee
and there exists a $p$-dimensional vector $\bgamma$ such that
\be \label{e9}
\bB\bgamma=\mathbf{0},\quad\bgamma^\T \bX \independent \bB\bX,
\ee
then by Lemma 4.3 in \cite{Dawid1979} and Proposition 4.6 in \cite{Cook1998}, we can get $MV(\bgamma^\T \bX|Y)=0$. This implies that under mild conditions, the MMV method can recover all the information in $\bX$ related to classification with $d<p$ indexes in Theorem \ref{t1}. Specifically, when $\bX \sim N(\bmu,\bSigma)$, we can obtain the following theorem.

\begin{theorem} \label{t5}
Assume $\bX \sim N(\bmu,\bSigma)$ and $Y \independent \bX|\bB \bX$ where $\bB=(\bbeta_1,\dots,\bbeta_k)^\T$ with $k<p$. For $d$ defined in Theorem \ref{t1}, if $2k\le p$, then $d \le 2k$. Specifically, if $\bSigma=\bI$, then $d=k$.
\end{theorem}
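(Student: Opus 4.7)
\emph{Proof proposal.} Set $S = \spa\{\bbeta_1,\ldots,\bbeta_k\}$ and $T = \ker(\bB\bSigma) = (\bSigma S)^\perp$; these subspaces of $\mathbb{R}^p$ have Euclidean dimensions $k$ and $p-k$ respectively. For any $\bbeta\in T$, joint normality gives $(\bbeta^\T\bX, \bB\bX)$ jointly Gaussian with covariance $\bB\bSigma\bbeta = \bzero$, so $\bbeta^\T\bX\independent\bB\bX$; combined with $Y\independent\bX\mid\bB\bX$ via Dawid's lemma (cited just before the theorem), we get $\bbeta^\T\bX\independent Y$ and thus $MV(\bbeta^\T\bX|Y) = 0$. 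Since $\bB\bSigma\bB^\T$ is positive definite, $S\cap T = \{\bzero\}$, so $\mathbb{R}^p = S\oplus T$ (a direct sum that is \emph{not} Euclidean-orthogonal); every $\bbeta$ has a unique decomposition $\bbeta = \bbeta_U + \bbeta_V$ with $\bbeta_U\in S$, $\bbeta_V\in T$, and because $\bSigma\bbeta_U \in \bSigma S = T^\perp$, we have $\bbeta_U^\T\bSigma\bbeta_V = 0$. Hence $\bbeta_U^\T\bX$ and $\bbeta_V^\T\bX$ are jointly Gaussian and uncorrelated, thus independent, while $\bbeta_V^\T\bX\independent Y$. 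Using the convolution identity $F_{W\mid Y=r} - F_W = (F_{U\mid Y=r} - F_U)\ast f_V$ for $W = U + V$, followed by Cauchy--Schwarz and the scale invariance of $MV$, I would establish the noise-monotonicity inequality $MV(\bbeta^\T\bX\mid Y) \le MV((\bbeta_U/\|\bbeta_U\|)^\T\bX\mid Y)$, with strict inequality whenever $\bbeta_V$ contributes a non-degenerate independent component.

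For the case $\bSigma = \bI$, $T = S^\perp$ and the $S\oplus T$ decomposition becomes Euclidean-orthogonal. I then induct on $i$: for $i = 1$ the noise-monotonicity inequality directly forces the unconstrained maximiser into $S$; for $i\le k$, any admissible $\bbeta$ has $\bbeta_V\in S^\perp$ automatically orthogonal to every previously chosen $\bbeta_{0j}\in S$, so $\bbeta_U\perp\bbeta_{0j}$ as well, making $\bbeta_U/\|\bbeta_U\|$ an admissible unit vector in $S$ whose $MV$ is at least as large. After $k$ such steps the $\bbeta_{0i}$'s form an orthonormal basis of $S$, and any further admissible $\bbeta$ must lie in $S^\perp = T$, so $MV = 0$ and the algorithm halts. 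Hence $d = k$.

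For general $\bSigma$, set $W = S + \bSigma S$, of dimension at most $2k$. My aim is to show $\bbeta_{0i}\in W$ for all $i$; together with the stopping condition $V_d^\perp\subseteq T$ (equivalently $\bSigma S\subseteq V_d$), this gives $V_d\subseteq W$ and $d\le\dim W\le 2k$. The replacement trick of the $\bSigma = \bI$ case fails here because $S$ and $T$ are no longer Euclidean-orthogonal, so the candidate $\bbeta_U/\|\bbeta_U\|$ generally violates the orthogonality constraints. I plan instead to exploit first-order Lagrangian stationarity: at the constrained maximiser, $\nabla_\bbeta MV(\bbeta^\T\bX|Y)\big|_{\bbeta=\bbeta_{0i}}\in\spa\{\bbeta_{0i},\bbeta_{01},\ldots,\bbeta_{0(i-1)}\}$. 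Since under joint normality plus $Y\independent\bX\mid\bB\bX$ the marginal law of $\bbeta^\T\bX$ depends on $\bbeta$ only through $(\bbeta^\T\bmu,\bbeta^\T\bSigma\bbeta)$ and the conditional law given $Y$ picks up additional $\bbeta$-dependence only through $\bB\bSigma\bbeta$ (because $\bbeta^\T\bX\mid\bB\bX$ is Gaussian with parameters determined by these summaries), differentiation of $MV$ under the integral, combined with its location invariance, should yield $\nabla_\bbeta MV\in\spa(\bSigma\bbeta) + \bSigma S$. Substituting this into the Lagrangian identity and invoking the inductive hypothesis $\bbeta_{0j}\in W$ for $j<i$, a careful linear-algebraic closure pins $\bbeta_{0i}$ into $W$, completing the induction and delivering $d\le 2k$. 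The main obstacle is precisely this gradient identification and closure: $\bX\mid Y$ is non-Gaussian in general (it is a Bayesian reweighting of $\bX$ by $P(Y\mid\bB\bX)$), so the differentiation under the integral requires careful justification, and the closure within the $2k$-dimensional $W$ is delicate because iterating $\bSigma$ does not in general preserve $W$; controlling this Krylov-type drift so that it stops after the first application of $\bSigma$ is the crux of the argument.
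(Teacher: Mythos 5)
Your opening computation is the same engine the paper uses: for $\gamma$ with $\bB\bSigma\bgamma=\bzero$, joint normality gives $\bgamma^\T\bX\independent\bB\bX$, and Dawid's Lemma 4.3 together with $Y\independent\bX\mid\bB\bX$ yields $\bgamma^\T\bX\independent Y$, hence $MV(\bgamma^\T\bX|Y)=0$ (the paper additionally imposes $\bB\bgamma=\bzero$ and works with $\bGamma=\{\bgamma:\bB\bgamma=\bzero,\ \bgamma^\T\bSigma\bB^\T=\bzero\}$). But the paper then stops: it simply counts $\dim\bGamma\ge p-2k$ in general and $=p-k$ when $\bSigma=\bI$, and reads off $d\le 2k$ (resp.\ $d=k$) by the same correspondence with $S_{Y|\bX}^\perp$ used in the proof of Theorem 1. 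You instead try to prove the much stronger assertion that the sequential maximizers themselves lie in $S$ (resp.\ $S+\bSigma S$). That ambition is motivated by a genuine subtlety --- a large subspace of null directions does not by itself force the orthonormal maximizers to be orthogonal to it --- but your argument does not close.

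Two concrete gaps. First, the noise-monotonicity inequality $MV(\bbeta^\T\bX|Y)\le MV((\bbeta_U/\|\bbeta_U\|)^\T\bX|Y)$ is asserted rather than proved, and the sketch has a real problem: after writing $F_{W|r}-F_W=(F_{U|r}-F_U)\ast f_V$ and applying Jensen/Cauchy--Schwarz inside the square, the outer integral against $dF_W$ becomes an integral of $(F_{U|r}-F_U)^2$ against the law of $U+V-V'$ (with $V'$ an independent copy of $V$), not against $dF_U$; so the comparison with $MV(U|Y)$ does not follow from Cauchy--Schwarz plus scale invariance, and whether the Cram\'er--von-Mises-type MV functional obeys such a data-processing inequality at all is a nontrivial claim needing its own proof. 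Second, for general $\bSigma$ you explicitly leave the gradient identification and the ``Krylov-type'' closure inside $W=S+\bSigma S$ open, describing it as the crux; so the $d\le 2k$ half of the theorem is not established in your write-up. To match the paper, replace the monotonicity/Lagrangian machinery by the dimension count on $\ker\bB\cap\ker(\bB\bSigma)$; if you want the stronger localization of the maximizers, the monotonicity lemma must actually be proven.
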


Theorems \ref{t1} and \ref{t5} indicate that when $\bSigma=\bI$, the MMV procedure can exactly recover the central subspace with $d=k$ steps under the setting of index model and normal covariates. To be specific, in the logistic (probit) model with normal covariates, $\bbeta_{01}\propto\bbeta_{1}$, where $\bbeta_1$ denotes the coefficient vector of the logistic (probit) model and $\bbeta_{01}$ is defined in Theorem \ref{t1}. This implies that $d=1$ is enough for the logistic (probit) model. The advantage of our method is that it is a nonparametric method, thus we do not need to assume any specific link functions.

\subsection{Other classification algorithms}

Other popular classification methods such as K-Nearest Neighbours (KNN), neural networks and support vector machine (SVM), can be connected to MMV by a two step procedure, i.e. dimension reduction first and classification next. We conjecture that such a two step procedure will improve the accuracy of the classification because high dimensionality causes problems in the classification algorithms mentioned above. Simulations in Section \ref{s5} validate this conjecture.

\setcounter{equation}{0}
\section{Consistency and asymptotic normality}
\label{s3}

We introduce the following notations to simplify the description of the theory. 
$\{\bX_i,Y_i\}_{i=1}^n$ are i.i.d. samples. $n_r$ indicates the number of samples in the class $Y=y_r$ for $r=1,\dots,R$. 
For the case $d=2$ where $d$ is defined in Theorem \ref{t1}, denote
\[
L^{(1)}(\btheta_1) =: L^{(1)}(\bbeta_1, \lambda_1) = MV(\bbeta_1^{\T}\bX|Y) + \lambda_1 (\bbeta_1^{\T}\bbeta_1-1),
\]
\[
L_{nh}^{(1)}(\btheta_1) =: L_{nh}^{(1)}(\bbeta_1, \lambda_1) = MV_n(\bbeta_1^{\T}\bX|Y) + \lambda_1 (\bbeta_1^{\T}\bbeta_1-1),
\]
and
\[
L^{(2)}(\btheta_2) =: L^{(2)}(\bbeta_2, \blambda_2) = MV(\bbeta_2^{\T}\bX|Y) + \lambda_{21} (\bbeta_2^{\T}\bbeta_2-1) + \lambda_{22}(\bbeta_{01}^{\T}\bbeta_2),
\]
\[
L_{nh}^{(2)}(\btheta_2) =: L_{nh}^{(2)}(\bbeta_2, \blambda_2) = MV_n(\bbeta_2^{\T}\bX|Y) + \lambda_{21} (\bbeta_2^{\T}\bbeta_2-1) + \lambda_{22}(\widehat\bbeta_{1}^{\T}\bbeta_2),
\]
where $\blambda_2 =(\lambda_{21},\lambda_{22})^{\T}$. Let $\btheta_{0i}=(\bbeta_{0i}^\T,\blambda_{0i})^\T$ and $\hat\btheta_i=(\hat\bbeta_i^\T,\hat\blambda_i)^\T$ be the maximizers of $L^{(i)}(\btheta_i)$ and $L_{nh}^{(i)}(\btheta_i)$ respectively for $i=1,2$. Let $\Omega_i$ denote the parameter space of $\bbeta_i$ and $B(\kappa_i)=\{\bbeta_i: \|\bbeta_i-\bbeta_{0i}\|\le\kappa_i\}$ be a ball with center $\bbeta_{0i}$ and radius $\kappa_i$ for $i=1,2$. The boundary of the ball is denoted by $\partial B(\kappa_i)$. Denote $\Gamma_i = \{\bbeta_i: \bbeta_i^{\T}\bbeta_i=1\}$ for $i=1,2$ and $\Upsilon=\{\bbeta_2: \bbeta_{01}^{\T}\bbeta_2=0\}$. Let $C(\kappa_i)=\{\bbeta_i: \|\bbeta_i-\bbeta_{0i}\|\le\kappa_i\}$ be a complex ball in $\mathbb{C}^p$ with center $\bbeta_{0i}$ and radius $\kappa_i$ for $i=1,2$. Note that each element of $\bbeta_i$ is complex. Denote $\Gamma_{Ci} = \{\bbeta_i: \bbeta_i^{\T}\bbeta_i=1\}$ for $i=1,2$ and $\Upsilon_C=\{\bbeta_2: \bbeta_{01}^{\T}\bbeta_2=0\}$. For $i=1,2$, let $\Omega_{Ci}$ be the parameter space of the complex $\bbeta_{i}$.

The conditions below are required to establish the consistency and asymptotic normality.

$(1)$ $c_1/R \le \min_{1 \le r \le R} p_r \le \max_{1 \le r \le R} p_r \le c_2/R$ and $R=O(n^{\delta})$ with $0<\delta \le 1/2$.

$(2)$ a. There exists an open subset $\omega_1$ of $\Omega_1 \cap \Gamma_1$ that contains the true parameter $\bbeta_{01}$ for almost all $(\bX, Y)$ and $\sup_{\bbeta_1 \in B(\kappa_{01})} MV(\bbeta_1^{\T}\bX|Y)<\infty$, and for any $\kappa_1 \in (0, \kappa_{01}]$
\[
\sup_{\bbeta_1\in\partial B(\kappa_1)\cap \Gamma_1} MV(\bbeta_1^{\T}\bX|Y)<MV(\bbeta_{01}^{\T}\bX|Y).
\]
b. There exists an open subset $\omega_2$ of $\Omega_2 \cap \Gamma_2 \cap \Upsilon$ that contains the true parameter $\bbeta_{02}$ for almost all $(\bX, Y)$ and $\sup_{\bbeta_2 \in B(\kappa_{02})} MV(\bbeta_2^{\T}\bX|Y)<\infty$, and for any $\kappa_2 \in (0, \kappa_{02}]$
\[
\sup_{\bbeta_2\in\partial B(\kappa_2)\cap \Gamma_2 \cap \Upsilon} MV(\bbeta_2^{\T}\bX|Y)<MV(\bbeta_{02}^{\T}\bX|Y).
\]

$(3)$ $\int uK(u) du=0$, $\int u^2 K(u) du<\infty$ and $nh^4\to 0$ where $h=h_1=\dots=h_R$; $|K|_{\infty}=\sup_{u \in \mathbb{R}}|1/hK(u/h)|<\infty$.

$(4)$ There exists a constant $\kappa_{01}$ such that $MV(\bbeta_1^{\T}\bX|Y)$ is an analytic function of each coordinate of $\bbeta_1$ in $C(\kappa_{01}) \subseteq \Omega_{C1} \cap \Gamma_{C1}$ and $\sup_{\bbeta_1 \in C(\kappa_{01})} MV(\bbeta_1^{\T}\bX|Y)<\infty$. For any $\bbeta_1 \in C(\kappa_{01})$, $MV^{'}(\bbeta_1)=MV^{'}(\bbeta_1^{\T}\bX|Y)$ and $MV^{''}(\bbeta_1)=MV^{''}(\bbeta_1^{\T}\bX|Y)$ exist. $\sup_{\bbeta_1 \in C(\kappa_{01})}\|MV^{'}(\bbeta_1)\|_{\infty} < \infty$ and $\sup_{\bbeta_1 \in C(\kappa_{01})}\|MV^{''}(\bbeta_1)\|_{\infty} < \infty$, where $\|MV^{'}(\bbeta_1)\|_{\infty}=\inf\{C>0:\mathbb{P}(\|MV^{'}(\bbeta_1)\|\le C)=1\}$ and $\|MV^{''}(\bbeta_1)\|_{\infty}=\inf\{C>0:\mathbb{P}(\|MV^{''}(\bbeta_1)\|\le C)=1\}$. (Note that we only give the condition for $i=1$. The conditions for the $i=2$ case are a bit more complex but quite similar to the $i=1$ case.)

$(5)$ $L^{''}(\btheta_0)$ is nonsingular and $\E (\|\bX/\sqrt{p}\|)< \infty$.

Condition $(1)$ requires the proportion of each response class to be moderate, not too small nor too large. $R=O(n^{\delta})$ allows the number of the classes to grow with the sample size. 
This condition is also imposed by \cite{Cui2015}. Condition $(2)$ is assumed to ensure the existence of the MMV optimizers. A similar condition is assumed in \cite{Chen2017} for a likelihood function. Simulation results below also verify this assumption. Condition $(3)$ is widely used in the literature of kernel density estimators and is assumed for the consistency of the estimator; see \cite{Cheng2017} for reference. Condition $(4)$ is not as strict as it seems. Recall that the MV index is defined on the cumulative distribution functions which are, of course, bounded. Therefore, we only need the bounded constraints for the corresponding density function and the first derivative of the density function, which are both quite mild. Condition $(5)$ is required to guarantee the root-$n$ consistency of the proposed estimator; it is in the spirit of the Von Mises proposition (\cite{Serfling1980}, Section $6.1$).

\begin{theorem}[Consistency]\label{t2}
Let $\{\bX_i, Y_i\}_{i=1}^n$ be i.i.d. samples. For both fixed and diverging $R$, under Conditions $(1)$ -- $(3)$, we have $\widehat\bbeta_i \to \bbeta_{0i}$ in probability as $n \to \infty$ for $i =1, 2$. Moreover, when $p$ satisfies $p^{p/2} n^{-\alpha(1-\delta)}=o(1)$ for any $\alpha<1/2$, under Conditions $(1)$ -- $(3)$, $\widehat\bbeta_i \to \bbeta_{0i}$ in probability as $n \to \infty$ for $i =1, 2$.
\end{theorem}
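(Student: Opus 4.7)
The plan is to prove consistency through the classical $M$-estimator recipe: establish uniform convergence of the empirical objective $L_{nh}^{(i)}$ to its population counterpart $L^{(i)}$ on a neighbourhood of $\btheta_{0i}$, and then combine this with the well-separated maximum property supplied by Condition $(2)$. Concretely, for $i=1$, once
\[
\sup_{\bbeta_1\in\Gamma_1\cap B(\kappa_{01})}\bigl|MV_n(\bbeta_1^{\T}\bX|Y)-MV(\bbeta_1^{\T}\bX|Y)\bigr|\xrightarrow{p}0,
\]
Condition $(2)$ forces $\widehat\bbeta_1$ into any prescribed neighbourhood of $\bbeta_{01}$ with probability tending to one, hence $\widehat\bbeta_1\xrightarrow{p}\bbeta_{01}$. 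The case $i=2$ is treated by the same scheme on the constrained manifold $\Gamma_2\cap\{\bbeta:\widehat\bbeta_1^{\T}\bbeta=0\}$; because $\widehat\bbeta_1\xrightarrow{p}\bbeta_{01}$, this empirical manifold is a vanishing perturbation of $\Gamma_2\cap\Upsilon$, and a standard argmax-continuity (Berge maximum theorem) argument delivers $\widehat\bbeta_2\xrightarrow{p}\bbeta_{02}$.

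For the uniform convergence I would decompose $MV_n-MV$ into four sources of error: (a) $\max_r|\hat p_r-p_r|=O_p(\sqrt{R\log R/n})=o_p(1)$ by a Bernstein inequality plus a union bound under Condition $(1)$; (b) the kernel bias of $\hat F_h$ and $\hat F_{hr}$, uniformly $O(h^2)=o(n^{-1/2})$ from $\int uK(u)du=0$, $\int u^2K(u)du<\infty$ and $nh^4\to 0$ in Condition $(3)$; (c) the stochastic fluctuation of the kernel-smoothed cdfs, controlled uniformly in the argument by a Dvoretzky--Kiefer--Wolfowitz-type inequality applied to the monotone integrated kernel, giving $O_p(\sqrt{\log n/n})$ for $\hat F_h$ and $O_p(\sqrt{R\log n/n})$ for $\hat F_{hr}$ (using $n_r\asymp n/R$); (d) replacement of the integral against $F$ by the outer empirical average, handled by an empirical-process bound on the half-space class.

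Passing to uniformity in $\bbeta$ uses that the integrated kernel is Lipschitz with constant $h^{-1}\sup|K|$, so $\bbeta\mapsto MV_n(\bbeta^{\T}\bX|Y)$ is Lipschitz with a random modulus of order $h^{-1}\max_i\|\bX_i\|$. In the fixed-$p$ case $\Gamma_1$ is compact, a finite $\varepsilon$-net suffices, and the uniform bound follows by combining the Lipschitz modulus with the pointwise deviation bounds above. This yields $\widehat\bbeta_1\xrightarrow{p}\bbeta_{01}$, and the step for $i=2$ is analogous after restricting to the constrained sphere.

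The main obstacle is the diverging-$p$ regime, where one must replace the finite $\varepsilon$-net by a covering that keeps up with the growing ambient dimension. The unit sphere in $\mathbb{R}^p$ admits an $\varepsilon$-net whose cardinality grows super-polynomially in $p$; Stirling-type estimates on the surface measure contribute the $p^{p/2}$ factor in the effective covering complexity. Per-point concentration supplies subgaussian decay at rate $n^{-\alpha}$ for any $\alpha<1/2$, and the additional $n^{\delta}$ loss reflects the $\sqrt{R}$ cost of handling the class-conditional cdfs $\hat F_{hr}$ when $R=O(n^{\delta})$. Balancing the net cardinality, the Lipschitz perturbation, and the per-point concentration by a union bound shows the uniform deviation is $o_p(1)$ precisely when $p^{p/2}n^{-\alpha(1-\delta)}=o(1)$, which is exactly the stated hypothesis. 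With the uniform bound secured, the identification step proceeds verbatim as in the fixed-$p$ case, delivering $\widehat\bbeta_i\xrightarrow{p}\bbeta_{0i}$ for $i=1,2$.
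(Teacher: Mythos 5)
Your proposal is correct and follows essentially the same route as the paper: a pointwise deviation bound for $MV_n-MV$ obtained by decomposing into the $\hat p_r$ error, the kernel-cdf estimation error, and the outer empirical-measure error; an $\varepsilon$-net plus the Lipschitz property of the integrated kernel to get uniformity over the constrained sphere; Condition $(2)$ for identification; and, for diverging $p$, a union bound over a net of cardinality of order $p^{p/2}$ balanced against the $n^{-\alpha(1-\delta)}$ per-point rate, which is exactly how the hypothesis $p^{p/2}n^{-\alpha(1-\delta)}=o(1)$ enters. Your treatment of $i=2$ via argmax continuity on the perturbed constraint manifold is, if anything, slightly more careful than the paper's one-line appeal to ``the same arguments.''
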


Theorem \ref{t2} indicates that the MMV estimators are consistent for both fixed and diverging $p$ and $R$ cases under regular conditions. Here, we only present the theorem for the case $d=2$ for the convenience of statement. When $p$ is fixed, the $\sqrt{n}$ consistency and asymptotic normality can be further proved.

\begin{theorem}[Asymptotic normality]\label{t3}
Let $\{\bX_i, Y_i\}_{i=1}^n$ be i.i.d. samples. Under Conditions $(1)$ -- $(5)$, 
$\sqrt{n}(\widehat\bbeta_i-\bbeta_{0i})$ is asymptotically normal with mean zero and covariance matrix $V_i$, where $V_i$ is defined in the proof for $i=1,2$.
\end{theorem}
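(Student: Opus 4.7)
\textbf{Proof proposal for Theorem \ref{t3}.} The plan is to use the standard Taylor-expansion route for M-estimators, specialized to the Lagrangian setup. Since $\hat\btheta_i$ maximizes $L_{nh}^{(i)}$, it satisfies the first-order condition $\nabla L_{nh}^{(i)}(\hat\btheta_i) = 0$. Expanding this around $\btheta_{0i}$ and using the consistency $\hat\btheta_i \to \btheta_{0i}$ established in Theorem \ref{t2} gives
\[
\sqrt{n}\,(\hat\btheta_i - \btheta_{0i}) = -\bigl[\nabla^2 L_{nh}^{(i)}(\tilde\btheta_i)\bigr]^{-1} \sqrt{n}\,\nabla L_{nh}^{(i)}(\btheta_{0i}),
\]
where $\tilde\btheta_i$ lies on the segment between $\btheta_{0i}$ and $\hat\btheta_i$. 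Condition $(5)$ guarantees $L^{(i)''}(\btheta_{0i})$ is nonsingular. Once I show (a) that $\nabla^2 L_{nh}^{(i)}(\tilde\btheta_i) \to L^{(i)''}(\btheta_{0i})$ in probability, and (b) that $\sqrt{n}\,\nabla L_{nh}^{(i)}(\btheta_{0i})$ is asymptotically $N(\mathbf{0},W_i)$ for some $W_i$, Slutsky's theorem yields the claim with $V_i = [L^{(i)''}(\btheta_{0i})]^{-1} W_i [L^{(i)''}(\btheta_{0i})]^{-1}$. The extraction of the $\bbeta_i$-block of $V_i$ gives the covariance in the statement.

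The main obstacle is step (b), the linearization of the score $\sqrt{n}\,\nabla L_{nh}^{(i)}(\btheta_{0i})$, because $MV_n$ is not a sum of i.i.d.\ variates: the smoothed CDF estimators $\hat F_h$ and $\hat F_{hr}$ appear squared and are evaluated at the random arguments $Z_j = \bbeta^{\T}\bX_j$, so $MV_n(\bbeta)$ is a functional of the empirical distribution. The plan is to treat $MV_n$ as a plug-in for the smooth functional $MV$ and apply a functional delta method. Under the bandwidth restriction $nh^4 \to 0$ in Condition $(3)$, the standard kernel-to-empirical CDF equivalence yields $\sup_z|\hat F_h(z) - F_n(z)| = o_p(n^{-1/2})$, and analogously for each $\hat F_{hr}$; this allows me to replace the smoothed CDFs in the score by their empirical counterparts up to $o_p(n^{-1/2})$ remainders. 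After this replacement $\sqrt{n}\,\nabla MV_n(\btheta_{0i})$ becomes the gradient of a second-order V-statistic. A Hoeffding decomposition together with a projection argument produces the i.i.d.\ representation
\[
\sqrt{n}\,\nabla L_{nh}^{(i)}(\btheta_{0i}) = n^{-1/2}\sum_{j=1}^n \Psi_i(\bX_j,Y_j;\btheta_{0i}) + o_p(1),
\]
where $\Psi_i$ is the influence function of the MV functional augmented by the Lagrange constraint term and has mean zero because $\nabla L^{(i)}(\btheta_{0i}) = 0$ at the population level. The multivariate CLT then delivers the normal limit with $W_i = \E[\Psi_i(\bX,Y;\btheta_{0i})\Psi_i(\bX,Y;\btheta_{0i})^{\T}]$. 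Conditions $(4)$ and $(5)$ enter here to bound the remainder terms uniformly on $C(\kappa_{0i})$ (analyticity plus bounded a.s.\ derivatives) and to supply the moment $\E\|\bX/\sqrt{p}\|<\infty$ needed to control the linear-in-$\bX$ factors from differentiating $Z_j = \bbeta^{\T}\bX_j$ with respect to $\bbeta$.

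For the Hessian in step (a), I would combine the uniform-in-$\bbeta$ differentiability afforded by Condition $(4)$ with a uniform law of large numbers on the neighbourhood $B(\kappa_{0i})$ of $\btheta_{0i}$ provided by the same boundedness conditions and Condition $(1)$; consistency of $\tilde\btheta_i$ then gives $\nabla^2 L_{nh}^{(i)}(\tilde\btheta_i) \to L^{(i)''}(\btheta_{0i})$ in probability. A subtlety specific to $i=2$ is that $L_{nh}^{(2)}$ uses $\hat\bbeta_1$ where $L^{(2)}$ uses $\bbeta_{01}$ in the orthogonality constraint. Since this enters linearly in $\bbeta_2$, the discrepancy contributes $\lambda_{22}\sqrt{n}(\hat\bbeta_1 - \bbeta_{01})$ to the score, which is $O_p(1)$ by the $i=1$ result; I would absorb it into the same linearization, propagate the $i=1$ influence function, and add the resulting cross-term to $W_2$. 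The most delicate technical point will be justifying the $o_p(n^{-1/2})$ remainder uniformly in a neighbourhood of $\btheta_{0i}$ and the uniform continuity of $\Psi_i$ in $\btheta_i$ needed so that evaluating at the random intermediate point $\tilde\btheta_i$ is permissible; Conditions $(4)$ and $(5)$ are exactly what is needed to push these smoothing-bias and stochastic-equicontinuity arguments through.
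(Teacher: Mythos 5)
Your overall skeleton---first-order condition, Taylor expansion of $\nabla L_{nh}^{(i)}$ around $\btheta_{0i}$, inversion of the Hessian, and Slutsky---matches the paper exactly, including the sandwich form of $V_i$. Where you genuinely diverge is the crucial step, the asymptotic normality of the score $\sqrt{n}\,\nabla MV_n(\bbeta_{0i})$. You propose the standard empirical-process route: replace $\hat F_h$, $\hat F_{hr}$ by empirical CDFs up to $o_p(n^{-1/2})$ using $nh^4\to 0$, view the result as a second-order V-statistic, and extract the influence function via a Hoeffding projection. The paper instead never linearizes $MV_n$ directly: it extends $\bbeta$ to a complex ball (this is the real purpose of Condition $(4)$ and of Lemma 2, which establishes uniform convergence of $MV_n$ to $MV$ for complex arguments), writes the derivative $T_{jn}$ as a Cauchy contour integral of $MV_n$ around $b_{0j}$, and uses the multiplicative representation $MV_n = MV\,(1+u_n+iv_n)$ with $u_n+iv_n=o_p(1)$ uniformly on the contour to conclude $T_{jn}-S_{jn}=o_p(1)$, where $S_{jn}$ is the derivative of the \emph{population} integrand evaluated at the sample points---an honest i.i.d.\ sum to which the CLT applies. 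The complex-analytic device converts uniform convergence of the function into convergence of its derivatives for free, which is exactly the step your route has to earn by hand.

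That is also where your sketch has a concrete soft spot. The score is the $\bbeta$-derivative of $MV_n$, and differentiating $\hat F_{hr}(\bbeta^{\T}\bX_i)$ in $\bbeta$ produces kernel density estimators $\hat f_{hr}$ (both through the evaluation point $\bbeta^{\T}\bX_i$ and through the construction points $\bbeta^{\T}\bX_j$ inside $\hat F_{hr}$). You cannot "replace the smoothed CDFs in the score by their empirical counterparts," because the empirical CDF is not differentiable in $\bbeta$; the substitution must be performed on the already-differentiated object, whose density-estimator factors converge at a slower-than-$\sqrt{n}$ pointwise rate and only recover $\sqrt{n}$-behaviour after they are paired with the $[\hat F_{hr}-\hat F_h]$ differences and integrated. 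This can be pushed through with a careful U/V-statistic expansion plus bias control, but it is substantially more work than the sentence devoted to it, and it is precisely the labour the paper's analyticity assumption is designed to avoid. On the other hand, your treatment of the $i=2$ case---propagating $\sqrt{n}(\hat\bbeta_1-\bbeta_{01})$ through the constraint $\lambda_{22}\hat\bbeta_1^{\T}\bbeta_2$ and adding the cross-term to $W_2$---is more explicit than the paper, which simply declares the $i=2$ proof a direct extension.
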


When $p$ is diverging with $n$, the asymptotic normality given above may or may not hold, which depends on the property of the MV index, thus the distribution of $\bX$ and $\bX|Y$. When the uniform convergence stays for the first and second derivatives of the estimated MV index with a properly fast convergence rate, the convergence rate of the MMV estimator can be proved via techniques similar to those employed in \cite{FanPeng2004} for the non-concave penalized likelihood when the number of parameters is diverging with the sample size. Consider the simple case $d=1$. More conditions are needed (the subscript ``1'' of $\bbeta_{01}$, $\btheta_1$ and $\btheta_{01}$ and the superscript ``(1)'' of $L^{(1)}$ and $L^{(1)}_{nh}$ are omitted for simplicity):

$(6)$ $\|\nabla MV_n(\bbeta_0)-\nabla MV(\bbeta_0)\|=O_p(\alpha_n)$ where $\alpha_n=\alpha_n(n,p,h)$ and $p^{3/2}\alpha_n =o_p(1)$.

$(7)$ $\lambda_{\max} \{\nabla^2 MV_n(\bbeta_0)-\nabla^2 MV(\bbeta_0)\}=o_p(1)$.

$(8)$ There is a large enough ball centering at $\btheta_{0}$ such that for any $\btheta$ in the ball,
\[
|\frac{\partial L(\btheta)}{\partial\theta_i \partial\theta_j \partial\theta_k}|\le M_1 <\infty \qquad |\frac{\partial L(\btheta)}{\partial\theta_i \partial\theta_j \partial\theta_k}-\frac{\partial L_{nh}(\btheta)}{\partial\theta_i \partial\theta_j \partial\theta_k}| \le M_2 <\infty
\]
for $1 \le j,k,l \le p$ where $M_1$ and $M_2$ are positive numbers.

\begin{theorem}\label{t4}
If Conditions $(1)$-$(8)$ hold, there exists a local maximizer $\hat\bbeta$ such that $\|\hat\bbeta-\bbeta_0\|=O_p(\alpha_n)$ where $\alpha_n$ is defined in Condition $(6)$.
\end{theorem}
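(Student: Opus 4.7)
}
My plan is to mimic the classical argument of \cite{FanPeng2004} for diverging-$p$ M-estimators: rather than attempting to construct $\hat{\bbeta}$ directly, I will show that with probability tending to one, $L_{nh}$ has a local maximizer inside the shell $\{\bbeta_0+\alpha_n\bu:\|\bu\|=C\}$ for some large constant $C$. Concretely, working with the Lagrangian form $L_{nh}(\btheta)=MV_n(\bbeta^\T\bX|Y)+\lambda(\bbeta^\T\bbeta-1)$ and writing $\btheta=(\bbeta^\T,\lambda)^\T$, it suffices to establish
\[
\mathbb{P}\Big(\sup_{\|\bu\|=C} L_{nh}(\btheta_0+\alpha_n\bu)<L_{nh}(\btheta_0)\Big)\ge 1-\varepsilon
\]
for any prescribed $\varepsilon>0$ and some $C=C(\varepsilon)$; continuity of $L_{nh}$ on the closed ball then yields an interior local maximizer $\hat\btheta$ with $\|\hat\btheta-\btheta_0\|\le C\alpha_n$, which implies the stated rate for $\hat\bbeta$.

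The first step is a Taylor expansion
\[
L_{nh}(\btheta_0+\alpha_n\bu)-L_{nh}(\btheta_0)=\alpha_n\,\bu^\T\nabla L_{nh}(\btheta_0)+\tfrac{1}{2}\alpha_n^{2}\,\bu^\T\nabla^{2}L_{nh}(\btheta_0)\bu+R_n(\bu),
\]
with $R_n$ a third-order remainder. Since $\btheta_0$ is a critical point of $L$, $\nabla L(\btheta_0)=\mathbf{0}$, and Condition $(6)$ together with $\nabla L(\btheta_0)=\mathbf{0}$ give $\|\nabla L_{nh}(\btheta_0)\|=O_p(\alpha_n)$; Cauchy--Schwarz then bounds the linear piece by $\alpha_n\cdot O_p(\alpha_n)\cdot\|\bu\|=O_p(\alpha_n^{2})\|\bu\|$. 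For the quadratic piece, Conditions $(5)$ and $(7)$ imply $\nabla^{2}L_{nh}(\btheta_0)=\nabla^{2}L(\btheta_0)+o_p(1)$ in maximum eigenvalue, so, writing $-c$ for the largest eigenvalue of $\nabla^{2}L(\btheta_0)$ (negative by nonsingularity at a constrained maximum), the quadratic piece is at most $-\tfrac{c}{2}\alpha_n^{2}\|\bu\|^{2}$ with high probability.

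For the remainder, I will expand
\[
R_n(\bu)=\frac{\alpha_n^{3}}{6}\sum_{j,k,l=1}^{p+1}\frac{\partial^{3}L_{nh}(\tilde\btheta)}{\partial\theta_j\partial\theta_k\partial\theta_l}\,u_j u_k u_l
\]
for some $\tilde\btheta$ on the segment between $\btheta_0$ and $\btheta_0+\alpha_n\bu$. Condition $(8)$ gives a uniform bound $|\partial^{3}L_{nh}/\partial\theta_j\partial\theta_k\partial\theta_l|\le M_1+M_2$ on the relevant ball, so $|R_n(\bu)|\le C'\alpha_n^{3} p^{3/2}\|\bu\|^{3}$ by Cauchy--Schwarz over the $p^{3}$ triples. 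Invoking $p^{3/2}\alpha_n=o_p(1)$ from Condition $(6)$, this is $o_p(\alpha_n^{2})\|\bu\|^{3}$. Combining the three estimates,
\[
L_{nh}(\btheta_0+\alpha_n\bu)-L_{nh}(\btheta_0)\le \alpha_n^{2}\|\bu\|\bigl\{O_p(1)-\tfrac{c}{2}\|\bu\|+o_p(1)\|\bu\|^{2}\bigr\},
\]
which is strictly negative, uniformly on $\|\bu\|=C$, once $C$ is chosen large enough. This yields the desired inequality and hence the local maximizer.

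The main obstacle is the cubic-remainder step: it is the place where the dimension $p$ appears in an essential way, and it is only the Cauchy--Schwarz bound over $p^{3}$ indices, combined with the delicate scaling $p^{3/2}\alpha_n=o_p(1)$ that the authors have built into Condition $(6)$, that lets the quadratic term continue to dominate. Verifying Conditions $(6)$--$(8)$ for the specific MV-based $L_{nh}$, i.e.\ controlling derivatives that involve the kernel estimators $\hat F_h$ and $\hat F_{hr}$ uniformly in $\bbeta$, would be the real technical work behind the theorem; within the present proof, however, these are taken as assumptions and the argument is the three-term Taylor expansion sketched above.
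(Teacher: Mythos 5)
Your proposal is correct and follows essentially the same route as the paper's proof: the same shell argument $\mathbb{P}\{\sup_{\|\bu\|=C}L_{nh}(\btheta_0+\alpha_n\bu)<L_{nh}(\btheta_0)\}\ge 1-\varepsilon$ in the style of Fan and Peng (2004), the same three-term Taylor expansion, with Condition $(6)$ controlling the linear term, Condition $(7)$ making the quadratic term negative and dominant, and Condition $(8)$ plus Cauchy--Schwarz over the $p^3$ third-derivative indices with $p^{3/2}\alpha_n=o_p(1)$ killing the remainder. No substantive differences from the paper's argument.
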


\setcounter{equation}{0}
\section{Simulations}
\label{s5}

 Although the MMV method can be readily used under settings of $n < p$, the computation cost for a sequential algorithm like ours is quite high. It is usually better to use cross validation to choose the proper number of indexes $d$ and empirical bandwidth $h$ in practice. However, the computation cost of doing so is also very high. For simplicity, instead, in the following simulation studies, we use $d$ as the true dimension of the central subspace and $h = 3\cdot sd(\tilde{\bbeta}_1)n^{-1/3}$ where $sd$ stands for standard deviation and $\tilde{\bbeta}_1$ is a good initial estimate of $\bbeta_1$. If the predictors' dimension is ultra-high, our suggestion is to conduct feature screening first to reduce dimensionality $p$ (say, $\exp(O(n^\xi))$ for some $\xi >0$) to a relatively large scale $d'$ (\eg, o(n)) by fast methods such as those of \cite{FanFan2008} and \cite{Cui2015}. When the size of $p$ is comparable to $n$, our method is quite fast and effective. We use ten-fold cross validation to calculate the classification error in both simulation and real data analysis. We repeat the experiment 400 times and the average classification error and the corresponding standard deviation (in parentheses) are then calculated. Let $\bbeta_1=(1,1,1,1,0,\ldots,0)^\T$ and $\bbeta_2=(1,-1,1,-1,0,\ldots,0)^{\T}$. The calculation for LDA, logistics regression, SVM and KKN is based on the corresponding Matlab (R2015a) packages using default settings.

\subsection{Fisher's linear discriminant analysis }
In this study, we set $p$ be 50 and 200 respectively, with the sample size $n=80$. We generate $\bY=(1,\ldots,1,-1,\ldots,-1)^{\sf T}$ first, then generate $\bX$ as follows. It is an ordinary LDA model which is in fact an inverse model with a one-dimensional central subspace.

\vspace{0.5cm} \textsc{Model I}
\begin{equation*}
\bX= \bbeta_1 Y + \bDelta\bepsilonbb,
\end{equation*} where $\bepsilonbb \sim N({\mathbf 0}, {\bf I}_{n})$ and $\bDelta_{ij} = 0.5^{|i-j|}$ for $1 \leq i,j \leq
10$.

    \begin{table}[ht]
         \centering
         \caption{Average classification error (percentage)}
         \vspace{0.1cm}
    \renewcommand{\arraystretch}{1.1}
   \tabcolsep 5pt
       \begin{tabular}{cccccccccc}
   \hline  Method & \multicolumn{1}{c}{MMV+LDA} &  &
   \multicolumn{1}{c}{LDA} & &
       \\ \cline{2-2} \cline{4-4}

                   &                  \multicolumn{3}{c}{$p=50$}                \\
     Model I              & 9.95 (3.43)  &       & 24.15 (6.21)         \\
            &                  \multicolumn{3}{c}{$p=200$}                \\
     Model I              & 13.87 (4.42)  &       & 19.83 (6.33)         \\

     \\\hline
        \end{tabular}
    \end{table}

Table 1 shows that MMV+LDA outperforms LDA significantly in both settings ($p=50$ and $p=200$). By applying MMV, the classification error is decreased by 50 percent or so. Although MMV+LDA equals LDA in the population level, the former does work better in the finite sample settings. The reason is that dimension reduction through MMV increases estimation efficiency.

\subsection{Logistic regression}
In this study, we set $p$ be 20 and 50 respectively, with the sample size $n=80$.  Since logistic regression utilizes likelihood estimation, the sample size is required to be larger than the dimension of the predictors. We generate data using the logistic model as follows.

\vspace{0.5cm} \textsc{Model II}
\begin{equation*}
Y = \mathrm{I}\left(1/\{1+\exp(\bbeta_1^{\T}\bX)\} \geq 0.5\right),
\end{equation*} where $\mathrm{I}$ is the indictor function and  $\bX = (X_1,\ldots,X_{p})^{\ T} \sim N({\bf 0},\bPsi)$ with
$\bPsi_{ij} = 0.5^{|i-j|}$ for $1 \leq i,j \leq p $. In this study, the central subspace is
spanned by the direction $\bbeta_1$.

    \begin{table}[ht]
         \centering
         \caption{Average classification error (percentage)}
         \vspace{0.1cm}
    \renewcommand{\arraystretch}{1.1}
   \tabcolsep 5pt
       \begin{tabular}{cccccccccc}
   \hline  Method & \multicolumn{1}{c}{MMV+Logistic regression} &  &
   \multicolumn{1}{c}{Logistic regression} & &
       \\ \cline{2-2} \cline{4-4}

                   &                  \multicolumn{3}{c}{$p=20$}                \\
     Model II              & 9.33 (3.36)  &       & 13.31 (4.16)         \\
            &                  \multicolumn{3}{c}{$p=50$}                \\
     Model II              & 14.85 (4.36)  &       & 31.71 (6.27)         \\

     \\\hline
        \end{tabular}
    \end{table}

It can been seen clearly from Table 2 that MMV, as a dimension reduction technique, improves estimation efficiency, and thus reduces classification error remarkably when it is combined with Logistic regression.

\subsection{More complex models}
We compare our method with more advanced algorithms like SVM and KKN in this study. Models III and IV are multiple index models. We set $p$ be 50 and 200 respectively, while the sample size $n=160$. In these two models, the central subspace is spanned by the directions $\bbeta_1$ and $\bbeta_2$.

\vspace{0.5cm} \textsc{Model III}
\begin{equation*}
Y = \mathrm{I}\left(\bbeta_1^{\T}X/\{0.5 + (\bbeta_2^{\T}X + 1.5)^2\} + 0.2\epsilon \geq 0\right),
\end{equation*} where $\epsilon \sim N(0,1)$, $\bX = (X_1,\ldots,X_{p})^{\T} \sim N({\bf0},\bPsi)$ with
$\bPsi_{ij} = 0.5^{|i-j|}$ for $1 \leq i,j \leq p $, and $\bX
\independent \epsilon$.

\vspace{0.5cm} \textsc{Model IV}
\begin{equation*}
Y = \mathrm{I}\left((\bbeta_1^{\T}\bX)^2  + (\bbeta_2^{\T}\bX)^2 + 0.2\epsilon \geq 1\right),
\end{equation*} where $\epsilon \sim N(0,1)$, $\bX = (X_1,\ldots,X_{p})^{\sf T} \sim N({\bf0},\bPsi)$ with
$\bPsi_{ij} = 0.5^{|i-j|}$ for $1 \leq i,j \leq p $, and $\bX
\independent \epsilon$.

    \begin{table}[ht]
         \centering
         \caption{Average classification error (percentage)}
         \vspace{0.1cm}
    \renewcommand{\arraystretch}{1.1}
   \tabcolsep 5pt
       \begin{tabular}{cccccccccc}
   \hline  Method & \multicolumn{1}{c}{MMV+SVM} & & \multicolumn{1}{c}{SVM}  & & \multicolumn{1}{c}{MMV+KKN} & & \multicolumn{1}{c}{KKN} & &
       \\ \cline{2-2} \cline{4-4} \cline{6-6} \cline{8-8}

                   &                  \multicolumn{7}{c}{$p=50$}                \\
     Model III              & 16.65 (3.61)  &       & 20.49 (4.01)&     & 17.80 (3.75)&  & 34.52 (4.51)     \\
     Model IV              & 16.66 (4.09)  &       & 25.22 (4.46) &    & 18.33 (4.65)&  & 22.75 (5.43)     \\
            &                  \multicolumn{7}{c}{$p=200$}                \\
     Model III              & 25.18 (4.05)  &       & 27.08 (4.45)&     & 25.63 (3.99)&  & 42.21 (4.40)     \\
     Model IV              & 16.20 (5.02)  &       & 19.35 (5.89)&     & 14.09 (4.06)&  & 22.70 (6.45)     \\

     \\\hline
        \end{tabular}
    \end{table}

Table 3 indicates that even for complex classification techniques such as SVM and KNN, employing MMV before classification still enjoys a significant decrease of the classification error. This further conforms the efficiency and priority of our proposed method.

\subsection{Real data analysis}
We apply our method to human colon cancer data with $n=62$ and $p=2000$, which is available in R. There are 40 samples from tumors ``t'', and 22 samples are from normal ``n'' biopsies. The data was originally collected on microarrays with 6500 probes. 2000 of them were selected apparently, randomly, to be used for demonstrating statistical methods. We first screen the number of the predictors to 100 by the method of \cite{Cui2015}, and compare our methods with LDA, SVM and KKN. We apply the same bandwidth selection and cross validation methods as those in the simulation study. A few choices of the dimension $d=1,2,3$ are tried and they have rather similar results. Here we only present the result with fixed $d=1$. To get a fair comparison, we repeat the permutation 100 times for cross validation results. The table below summarizes the average classification errors and the corresponding standard deviations.

    \begin{table}[ht]
         \centering
         \caption{Average classification error (percentage)}
         \vspace{0.1cm}
    \renewcommand{\arraystretch}{1.1}
   \tabcolsep 5pt
       \begin{tabular}{cccccccccc}
   \hline  & \multicolumn{1}{c}{MMV+LDA} & & \multicolumn{1}{c}{LDA}  \\
   & 11.24 (1.16)  &       & 16.74 (3.00) \\
      \hline  & \multicolumn{1}{c}{MMV+SVM} & & \multicolumn{1}{c}{SVM}  \\
   & 12.40 (1.52)  &       & 16.53 (2.41) \\
   \hline & \multicolumn{1}{c}{MMV+KKN} & & \multicolumn{1}{c}{KKN} \\
   & 14.37 (2.14)  &       & 18.73 (1.43)
     \\\hline
        \end{tabular}
    \end{table}

Table 4 demonstrates that ``MMV+.'' performs much better than the original classification method. It seems that the performances of the three methods are comparable to each other, with SVM performing a little better, while MMV+LDA performs the best among the three ``MMV+.'' methods. We conjecture that the relationship between MMV and the original classification techniques may be more than simple addition.

\setcounter{equation}{0}
\section{Conclusion}
\label{s6}

In this paper, we propose a new sufficient dimension reduction approach - Maximal Mean Variance - which is designed deliberately for high-dimensional classification. Our method requires fairly mild restrictions on the predicting variables and keeps the model-free advantage without the need to estimate the link function; thus it can be widely used. The relationships between MMV and popular classification methods are discussed in detail. The consistency and convergence rate of the MMV estimator are established for both fixed and diverging dimension ($p$) cases and the number of the response classes can also be allowed to grow with the sample size $n$. When the dimension of the predicting vector is fixed, we further construct the asymptotic normality for the MMV estimator. Simulations and real data analysis validate the efficiency and priority of the proposed method. While cross validation can be employed to choose a proper $d$ in practice, it would be quite challenging to derive an optimal $d$ theoretically under some kind of criterion because the algorithm of MMV is a stepwise procedure. We leave it for further research. Besides, the MMV method can be readily applied to the ultra high dimensional setting by conducting a screening procedure first. This two-scale learning framework is in the sprit of \cite{FanLv2008} for sure independence screening.

\setcounter{equation}{0}
\section{Appendix}
\label{s7}

\subsection{Proof of Theorem \ref{t1}}

\begin{proof}
By the property of MV index, we have $\bbeta_{0i}^{\T}\bX \not\independent Y$ for $i=1,\dots,d$ and $\bbeta_{0i}^{\T}\bX \independent Y$ for $i=d+1,\dots,p$. Recalling the definition of central subspace, this argument means $\bbeta_{0i}^{\T} \notin S_{Y|\bX}^{\perp}$ for $i=1,\dots,d$ and $\bbeta_{0i}^{\T} \in S_{Y|\bX}^{\perp}$ for $i=d+1,\dots,p$ where $S_{Y|\bX}^{\perp}$ is the orthogonal complement space of $S_{Y|\bX}$. Noting that $\bbeta_{0i}^{\T}\bbeta_{0i}=1$, $\bbeta_{0i}^{\T}\bbeta_{j0}=0$ for $i,j=1,\dots,p$ and $i \ne j$ and ${\rm span}\{\bbeta_{(d+1)0},\dots,\bbeta_{p0}\}\subseteq S_{Y|\bX}^{\perp}$, we easily obtain $S_{Y|\bX}\subseteq {\rm span}\{\bbeta_{01},\dots,\bbeta_{0d}\}$.

For any $k<d$, if ${\rm span}\{\bbeta_{01},\dots,\bbeta_{0k}\} \supseteq S_{Y|\bX}$, we can obtain for any $i> k$, $\bbeta_{i}^{\T}\bX\independent Y$ which leads to a contradiction at least for $i=d>k$.

\end{proof}

\subsection{Proof of Theorem \ref{t6}}

\begin{proof}

Let $Z$ denote the variable $\bbeta^\T \bX$ and $Z_i$ denote the variable $\bbeta^\T\bX|Y=i$ ($i=1,-1$) for $\bbeta \in \mathbb{R}^p$ satisfying $\|\bbeta\|=1$. Then, $Z_1\sim N(\bbeta^\T \bmu, \bbeta^\T\bSigma\bbeta)$, $Z_{-1}\sim N(-\bbeta^\T \bmu, \bbeta^\T\bSigma\bbeta)$ and
\[
\begin{split}
F(Z)&=p_1 \cdot F_1(Z)+p_{-1} \cdot F_{-1}(Z)\\
&=p_1\cdot \Phi((Z-\bbeta^\T \bmu)/\sqrt{\bbeta^\T\bSigma\bbeta})+p_{-1}\cdot \Phi((Z+\bbeta^\T \bmu)/\sqrt{\bbeta^\T\bSigma\bbeta}),
\end{split}
\]
where $F_i(Z)$ denotes the conditional distribution function of $Z|Y=i$ for $i=1,-1$.
Then,
\[
\begin{split}
MV(\bbeta)& \ = \ p_1\int [F_1(z)-F(z)]^2 \mathrm{d}F(z)+p_{-1}\int [F_{-1}(z)-F(z)]^2 \mathrm{d}F(z)\\
& \ = \ \left(p_1 p_{-1}^2+p_{-1}p_1^2\right) \left\{ p_1 \int [F_1(z)-F_{-1}(z)]^2 \mathrm{d}F_1(z)+p_{-1} \int [F_1(z)-F_{-1}(z)]^2 \mathrm{d}F_{-1}(z) \right \}\\
& \ = \ p_1 p_{-1} \left\{ p_1 \int \left[\Phi(t)-\Phi(t+2\bbeta^\T \bmu/\sqrt{\bbeta^\T\bSigma\bbeta})\right]^2 \mathrm{d}\Phi(t) \right.\\
&\ \left.~~~~+p_{-1} \int \left [\Phi(t)-\Phi(t-2\bbeta^\T \bmu/\sqrt{\bbeta^\T\bSigma\bbeta})\right]^2 \mathrm{d}\Phi(t) \right \}.\\
\end{split}
\]
\\
Hence, we only need to maximize $\bbeta^\T \bmu/\sqrt{\bbeta^\T\bSigma\bbeta}$. The solution is exactly the optimal weight of LDA, i.e. $\bbeta_{01} \propto \bSigma^{-1}\bmu$, where $\bbeta_{01}$ is defined in Theorem \ref{t1}. 
Thus, if we choose $d=1$ in the MMV procedure, then MMV+LDA equals LDA at the population level.

Then we try to find $\bbeta_{02}$. We always assume $\bSigma$ is positive definite, and thus for any non-zero $\bbeta$, $\bbeta^\T \bSigma\bbeta>0$. Since now we maximize $(\bbeta^\T  \bSigma\bbeta)^{-1/2}\bbeta^\T  \bmu$ subject to $\bbeta^\T \bbeta=1$ and $\bbeta^\T \bSigma^{-1}\bmu=0$, let
$$
L \ = \ (\bbeta^\T  \bSigma\bbeta)^{-1/2}\bbeta^\T  \bmu+\lambda(\bbeta^\T  \bbeta-1)+\pi \bbeta^\T \bSigma^{-1}\bmu.
$$
Take first partial derivative w.r.t. $\bbeta$ and set it to zero we have
$$
\frac{\partial L}{\partial \bbeta} \ = \ (\bbeta^\T  \bSigma\bbeta)^{-1/2}\bmu^\T -\bbeta^\T  \bmu (\bbeta^\T  \bSigma\bbeta)^{-3/2}\bbeta^\T \bSigma+2\lambda\bbeta^\T +\pi\bmu^\T  \bSigma^{-1} \ = \ 0.
$$
Then
\[
\begin{split}
0 \ &  = \ \frac{\partial L}{\partial \bbeta} \bbeta \\
& = \ (\bbeta^\T  \bSigma\bbeta)^{-1/2}\bmu^\T \bbeta-(\bbeta^\T  \bSigma\bbeta)^{-1/2} \bbeta^\T  \bmu +2\lambda\bbeta^\T \bbeta +\pi\bmu^\T  \bSigma^{-1}\beta \\
& = \ 2\lambda,
\end{split}
\]
i.e. $\lambda=0$. Now $(\bbeta^\T  \bSigma\bbeta)^{3/2}\frac{\partial L}{\partial \bbeta}=0$ is reduced to
\begin{equation}\label{1}
0 \ = \ (\bbeta^\T \bSigma\bbeta)\bmu^\T -\bbeta^\T \bmu\bbeta^\T  \bSigma+(\bbeta^\T  \bSigma\bbeta)^{3/2} \pi \bmu^\T  \bSigma^{-1}. 
\end{equation}
Multiply both sides of above equation by $\bSigma^{-1}\bbeta$ from right, together with $\bbeta^\T \bbeta=1$ and $\bbeta^\T \bSigma^{-1}\bmu=0$, we have
\begin{equation}\label{2}
\begin{array}{lll}
0 & = & \displaystyle (\bbeta^\T \bSigma\bbeta)\bmu^\T \bSigma^{-1}\bbeta-\bbeta^\T \bmu\bbeta^\T  \bSigma\bSigma^{-1}\bbeta+(\bbeta^\T  \bSigma\bbeta)^{3/2}\pi \bmu^\T  \bSigma^{-1}\bSigma^{-1}\bbeta \\
& = & \displaystyle0- \bbeta^\T \bmu+\pi(\bbeta^\T  \bSigma\bbeta)^{3/2} \bmu^\T \bSigma^{-2}\bbeta \\
& = & \displaystyle \pi(\bbeta^\T  \bSigma\bbeta)^{3/2} \bmu^\T \bSigma^{-2}\bbeta-\bmu^\T  \bbeta,
\end{array}
\end{equation}
and thus
\begin{equation}\label{3}
\pi=(\bbeta^\T  \bSigma\bbeta)^{-3/2} \left(\bmu^\T \bSigma^{-2} \bbeta\right)^{-1}\left(\bmu^\T  \bbeta\right).
\end{equation}
Note that none of $\pi$, $\bmu^\T \bSigma^{-2}\bbeta$ and $\bmu^\T \bbeta$ can be 0. If $\pi=0$, then (\ref{2}) gives $\bmu^\T \bbeta=0$. Plugging this and $\pi=0$ back into (\ref{1}) gives $\bbeta^\T \bSigma\bbeta\bmu^\T =0$, i.e. $\bmu=0$, which contradicts the assumption $\bmu\neq 0$ and thus $\pi\neq 0$. If $\bmu^\T \bSigma^{-2}\bbeta=0$, then (\ref{2}) gives $\bmu^\T \bbeta=0$. Multiplying $\bSigma^{-2}\bbeta$ from the right to (\ref{1}) and plugging in $\bmu^\T \bSigma^{-2}\bbeta=0$ and $\bmu^\T \bbeta=0$, we have $\bmu^\T \bSigma^{-3}\bbeta=0$ as $\pi\neq 0$. If we repeat this deduction then we have $\bmu^\T \bSigma^{-i}\bbeta=0$, $i=0,1,\cdots$. This can not hold for general $\bmu$ and $\bSigma$ unless $\bbeta=0$, which contradicts $\bbeta=1$, and thus $\bmu^\T \bSigma^{-2}\bbeta\neq 0$. If $\bmu^\T \bbeta=0$, then $\pi=0$ which contradicts $\pi\neq 0$. Therefore (\ref{3}) holds, is well-defined and $\pi\neq 0$. Now plug the $\pi$ back into (\ref{1}) and we have
\begin{equation}\label{4}
\begin{array}{lll}
0  & = & \bbeta^\T \left( \bSigma\bbeta\bmu^\T  \right) - \bbeta^\T  \left(\bmu \bbeta^\T  \bSigma \right) + \left(\bmu^\T \bSigma^{-2} \bbeta\right)^{-1}\left(\bmu^\T  \bbeta\right)\bmu^\T  \bSigma^{-1} \\
& = & \bbeta^\T  U- \bbeta^\T  U^\T  +W, {\rm \  say,}
\end{array}
\end{equation}
and thus $\bbeta^\T  U=\bbeta^\T  U^\T -W$. By multiplying this equation from the right by $\bbeta$ and applying it again, we have
\[
\begin{split}
\bbeta^\T  U \bbeta \ & = \ \bbeta^\T  U^\T \bbeta-W\bbeta \\
& = \ \bbeta^\T  \left(\bbeta^\T  U\right)^\T -W\bbeta \\
& = \ \bbeta^\T  \left( \bbeta^\T  U^\T -W\right)^\T -W\bbeta \\
& = \ \bbeta^\T  U\bbeta-\bbeta^\T  W^\T -W\bbeta,
\end{split}
\]
i.e. $0=W\bbeta+\bbeta^\T  W^\T =\left(\bmu^\T \bSigma^{-2} \bbeta\right)^{-1}\left(\bmu^\T  \bbeta\right)\left[ \bmu^\T  \bSigma^{-1}\bbeta+\bbeta^\T \bSigma^{-1}\bmu \right]$. Since $\bmu^\T \bbeta\neq 0$, we have $ \bmu^\T  \bSigma^{-1}\bbeta+\bbeta^\T \bSigma^{-1}\bmu=0$. Its solution is of the form $\bbeta=\bSigma (V-V^\T )\bmu$ with $V$ a $p\times p$ matrix. Plugging this $\bbeta$ back into (\ref{4}), we have
\[
\begin{split}
0 \  = \ & \bmu^\T (V^\T -V)\bSigma^3(V-V^\T )\bmu\bmu^\T  - \bmu^\T (V^\T -V)\bSigma\bmu\bmu^\T (V^\T -V)\bSigma^2 \\
 & +\left [ \bmu^\T  \bSigma^{-1}(V-V^\T )\bmu  \right]^{-1} \bmu^\T  \bSigma(V-V^\T )\bmu\bmu^\T \bSigma^{-1}.
\end{split}
\]
By the arbitrariness of $\bmu$, the above equation gives
\[
\begin{split}
0 \  = \ & (V^\T -V)\bSigma^3(V-V^\T )\bmu\bmu^\T  - (V^\T -V)\bSigma\bmu\bmu^\T (V^\T -V)\bSigma^2 \\
 & +\left [ \bmu^\T  \bSigma^{-1}(V-V^\T )\bmu  \right]^{-1} \bSigma(V-V^\T )\bmu\bmu^\T \bSigma^{-1}.
\end{split}
\]
Multiplying it by $\bmu^\T  \bSigma^{-2}$ from the left, we have
\[
0 \ = \ \bmu^\T  \bSigma^{-2}(V^\T -V)\bSigma^3(V-V^\T )\bmu\bmu^\T  - \bmu^\T  \bSigma^{-2}(V^\T -V)\bSigma\bmu\bmu^\T (V^\T -V)\bSigma^2+\bmu^\T \bSigma^{-1},
\]
from which, by the arbitrariness of $\bmu$ and $\bSigma$ again, we have
\[
0 \ = \ \bSigma^{-1}(V^\T -V)\bSigma^3(V-V^\T )\bmu\bmu^\T  - \bSigma^{-1}(V^\T -V)\bSigma\bmu\bmu^\T (V^\T -V)\bSigma^2+I.
\]
Thus
\[
I \ = \ \bSigma^{-1}(V^\T -V)\bSigma \left \{\bmu\bmu^\T (V^\T -V)\bSigma^2 -\left[\bmu\bmu^\T (V^\T -V)\bSigma^2 \right]^\T  \right\} \ = \  \bSigma^{-1}(V^\T -V)\bSigma S, {\rm \ say}.
\]
If we take transpose it follows that
\[
\begin{split}
I \ & = \ S^\T  \bSigma (V-V^\T ) \bSigma^{-1}  \\
& = \ -S \bSigma\left[-(V^\T -V)\right ]\bSigma^{-1} \\
& = \ S \bSigma (V^\T -V) \bSigma^{-1}
\end{split}
\]
Combining the above two equations gives $\bSigma^{-1}(V^\T -V)\bSigma S=S \bSigma (V^\T -V) \bSigma^{-1} $ which has the solutions $S=\bSigma^{-2}$ and $S=0$. If $S=\bSigma^{-2}$, then
by the fact $S^\T =-S$ we have $\bSigma^{-2}=-\bSigma^{-2}$ which is not valid. Thus we have $S=0$. From it we have $\bmu\bmu^\T (V^\T -V)\bSigma^2 =\left[\bmu\bmu^\T (V^\T -V)\bSigma^2 \right]^\T $ which has solutions $V^\T -V=\bSigma^{-2}$ and $V^\T -V=0$. If $V^\T -V=\bSigma^{-2}$, then $(V^\T -V)^\T = (\bSigma^{-2})^\T  $, i.e. $V-V^\T =\bSigma^{-2}$, i.e. $-\bSigma^{-2}=\bSigma^{-2}$ a contradiction. Thus $V^\T -V=0$ which gives $\bbeta=\bSigma(V-V^\T )\bmu=0$.

Therefore $\bbeta_{02}=0$ for general $\bmu$ and $\bSigma$. For some specific $\bmu$ and $\bSigma$, there might exist nonzero $\bbeta_{02}$, but always $\bbeta_{02}^\T \bmu=0$. This gives the conclusion that for multivariate normal, the true $d=1$ and only the first $\bbeta_{01}$ contributes to the variation among different classes.

\end{proof}

\subsection{Proof of Theorem \ref{t5}}

\begin{proof}
From $Y\independent\bX|\bB\bX$ and $\bB\bgamma=\mathbf{0}$, we get $Y\independent\bgamma^\T\bX|\bB\bX$ (Proposition 4.3 in \cite{Cook1998}). If we also have $\bgamma^\T\bX\independent\bB\bX$, then by Lemma 4.3 of \cite{Dawid1979} and Proposition 4.6 of \cite{Cook1998}, we obtain $\bgamma^\T\bX\independent Y$ which implies $MV(\bgamma^\T\bX|Y)=0$ according to the property of the MV index. When $\bX\sim N(\bmu,\bSigma)$, then $\bgamma^\T\bX\independent\bB\bX$ if and only if
\be \label{e10}
\cov(\bgamma^\T\bX,\bB\bX)=\bgamma^\T \bSigma\bB^\T=\mathbf{0}.
\ee
If $\bSigma=\mathbf{I}$, then (\ref{e10}) holds by $\bB\bgamma=\mathbf{0}$. Let $\bGamma$ denote the vector space spanned by all $\bgamma$ satisfying (\ref{e10}) and $\bB\bgamma=\mathbf{0}$, then we can easily get $\mathrm{dim}(\bGamma)=p-k$ which implies $d=k$ for $d$ defined in Theorem \ref{t1}. If $\bSigma\ne\mathbf{I}$ and $2k\le p$, $\mathrm{dim}(\bGamma)\ge p-2k$ (note that $\bSigma$ is positive definite), thus $d\le 2k$.
\end{proof}

\subsection{Proof of Theorem \ref{t2}}

\begin{lemma}\label{l1}
Under Condition $(1)$, for any $\bbeta \in \mathbb{R}^p$, we have\\
$(1)$ if $R$ is fixed, $MV_n(\bbeta^{\T}\bX|Y) \to MV(\bbeta^{\T}\bX|Y)$ in probability as $n \to \infty$.\\
$(2)$ if $R$ is diverging with $n$ and satisfies $R = O(n^{\delta})$ with $\delta \le 1/2$,
$MV_n(\bbeta^{\T}\bX|Y) \to MV(\bbeta^{\T}\bX|Y)$ in probability as $n \to \infty$.

\end{lemma}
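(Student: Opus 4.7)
The plan is to write $MV_n - MV$ as a sum of two pieces that can each be driven to zero. Set $g(z) = \sum_{r=1}^R p_r[F_r(z) - F(z)]^2$ and $g_n(z) = \sum_{r=1}^R \hat p_r[\hat F_r(z) - \hat F(z)]^2$, and write, with $Z_i = \bbeta^{\T}\bX_i$,
\[
MV_n(\bbeta^{\T}\bX|Y) - MV(\bbeta^{\T}\bX|Y) \;=\; \underbrace{\frac{1}{n}\sum_{i=1}^n [g_n(Z_i) - g(Z_i)]}_{T_1} \;+\; \underbrace{\frac{1}{n}\sum_{i=1}^n g(Z_i) - \E g(Z)}_{T_2},
\]
using the identity $\E g(Z) = MV(\bbeta^{\T}\bX|Y)$.

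For $T_2$ I would invoke the weak law of large numbers. Since $0 \le F, F_r \le 1$ for every $r$, one has $|g(z)| \le \sum_r p_r = 1$ uniformly in $z$ and in $R$, so boundedness and the i.i.d.\ structure of the $Z_i$'s give $T_2 = o_p(1)$ regardless of how fast $R$ grows. For $T_1$ I would decompose further,
\[
g_n(z) - g(z) \;=\; \sum_{r=1}^R (\hat p_r - p_r)[\hat F_r(z) - \hat F(z)]^2 \;+\; \sum_{r=1}^R p_r\bigl\{[\hat F_r(z) - \hat F(z)]^2 - [F_r(z) - F(z)]^2\bigr\},
\]
and treat the second sum via $a^2-b^2 = (a-b)(a+b)$ using the trivial bound $|a+b|\le 2$. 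After this split, the whole problem reduces to controlling three ingredients: (i) $\sum_r |\hat p_r - p_r|$, (ii) $\sup_z|\hat F(z) - F(z)|$, and (iii) $\sum_r p_r \sup_z|\hat F_r(z) - F_r(z)|$. Ingredient (ii) is standard: the smoothing bias is $O(h^2)$ under Condition (3) (where $nh^4 \to 0$), while the stochastic deviation is $O_p(n^{-1/2})$ by a DKW-type argument for the kernel-smoothed empirical distribution function, so (ii) is $o_p(1)$.

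The main obstacle is making sure these rates survive when $R$ diverges, and this is precisely where Condition (1) is needed. The key trick is to use Cauchy--Schwarz together with $\sum_r p_r = 1$ in the form $\sum_{r=1}^R \sqrt{p_r} \le \sqrt{R\sum_r p_r} = \sqrt{R}$, which converts a sum of $R$ square-root terms into a factor of $\sqrt{R}$ rather than $R$. Multinomial variance gives $\E|\hat p_r - p_r|\le \sqrt{p_r/n}$, and hence (i) is $O_p(\sqrt{R/n})$. Under Condition (1), $n_r \gtrsim n/R$ with probability tending to one, so $\sup_z|\hat F_r(z) - F_r(z)| = O_p((n/R)^{-1/2}) + O(h^2)$; weighting by $p_r$ and applying Cauchy--Schwarz once more shows (iii) is also $O_p(\sqrt{R/n}) + O(h^2)$. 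Under $R = O(n^\delta)$ with $\delta \le 1/2$, all three pieces are $o_p(1)$, yielding $T_1 = o_p(1)$ and therefore the lemma. The fixed-$R$ case follows immediately as a special case where the $\sqrt{R/n}$ bounds become the sharper $O_p(n^{-1/2})$. The delicate point throughout is to avoid losing a factor of $R$ when per-class rates are summed; using Cauchy--Schwarz against the probability weights is what keeps the diverging-$R$ regime on the correct side of $\delta \le 1/2$.
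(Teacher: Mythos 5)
Your proof is correct and follows essentially the same route as the paper: both arguments reduce the problem to controlling $\sum_r|\hat p_r-p_r|$, the uniform deviation of the smoothed unconditional CDF, and the uniform deviations of the per-class CDFs with effective sample sizes $n_r\gtrsim n/R$ guaranteed by Condition (1), which is exactly where the constraint $R=O(n^{\delta})$ enters. The only cosmetic difference is that you dispose of the change-of-measure term $\frac{1}{n}\sum_i g(Z_i)-\E g(Z)$ by the weak law of large numbers for the bounded function $g$, whereas the paper bounds the corresponding term $\int[F_r-F]^2\,d(\hat F-F)$ via an extended Glivenko--Cantelli argument; both are valid and yield the same conclusion.
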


\begin{proof}
Denote $\bbeta^{\T}\bX$ by $X$ with support $\mathbb{R}_X$ and the transformed samples $\{\bbeta^{\T}\bX_{j}\}_{j=1}^{n}$ by $\{X_{j}\}_{j=1}^n$. By the definitions of $MV(\bbeta^{\T}\bX|Y)$ and $MV_n(\bbeta^{\T}\bX|Y)$, we have
\[
\begin{split}
&~MV_n(\bbeta^{\T}\bX|Y)-MV(\bbeta^{\T}\bX|Y)= MV_n(X|Y) - MV(X|Y)\\
&~= \frac{1}{n}\sum_{j=1}^n \sum_{r=1}^R \hat{p}_r [\hat{F}_{hr}(X_j)-\hat{F}_h (X_j)]^2 - \sum_{r=1}^R p_r \int [F_r(x) - F(x)]^2 d F(x)\\
&~= \sum_{r=1}^R \hat{p}_r \int [\hat{F}_{hr}(x)-\hat{F}_h (x)]^2 d \hat{F}(x) - \sum_{r=1}^R p_r \int [F_r(x) - F(x)]^2 d F(x)\\
&~= \sum_{r=1}^R \hat{p}_r (\int [\hat{F}_{hr}(x)-\hat{F}_h (x)]^2 d \hat{F}(x) - \int [F_r(x) - F(x)]^2 d F(x))\\
&~~~ + \sum_{r=1}^R (\hat{p}_r-p_r) \int [F_r(x) - F(x)]^2 d F(x)\\
&~= \sum_{r=1}^R \hat{p}_r \int ([\hat{F}_{hr}(x)-\hat{F}_h (x)]^2-[F_r(x) - F(x)]^2)d \hat{F}(x) \\
&~~~+ \sum_{r=1}^R \hat{p}_r \int [F_r(x) - F(x)]^2 d [\hat{F}(x)-F(x)]\\
&~~~+ \sum_{r=1}^R (\hat{p}_r-p_r) \int [F_r(x) - F(x)]^2 d F(x)\\
&~=: A_1+A_2+A_3.
\end{split}
\]

For the first term $A_1$,
\[
\begin{split}
|A_1| &~\le 2 \max_{r} \int |[\hat{F}_{hr}(x)-F_r(x)]-[\hat{F}_h(x)-F(x)]|d \hat{F}(x)\\
&~\le 2 \max_{r} \sup_{x \in \mathbb{R}_X}(|\hat{F}_{hr}(x)-F_r(x)|+|\hat{F}_h(x)-F(x)|)\\
&~=:2 (B_1+B_2),
\end{split}
\]
where the second inequality is obtained by $\int d \hat{F}(x) = 1$. We then consider the term $B_1$,
\[
\begin{split}
B_1&~=\max_{r} \sup_{x\in\mathbb{R}_X}|\hat{F}_{hr}(x)-F_r(x)|\\
&~=\max_r O_p(n_r^{-\alpha})\\
&~=O_p(n^{-\alpha(1-\delta)}),
\end{split}
\]
where the second equality is implied by Theorem 2.2 of \cite{Cheng2017} with any $\alpha \le 1/2$ under mild conditions and the last equality is given by Condition $(1)$ and the fact that $|\hat{p}_r-p_r|=O_p(n^{-1/2})$. For the second term, also by Theorem 2.2 of \cite{Cheng2017}, we obtain $B_2=\sup_{x\in\mathbb{R}_X}|\hat{F}_h(x)-F(x)|=O_p(n^{-\alpha})$.

We turn to the term $A_2$,
\[
\begin{split}
|A_2| &~= \sum_{r=1}^R \hat{p}_r \int [F_r(x) - F(x)]^2 d |\hat{F}(x)-F(x)|\\
&~ \le \max_{r} \int [F_r(x) - F(x)]^2 d |\hat{F}(x)-F(x)|\\
&~ \le \int d |\hat{F}(x)-F(x)|\\
&~ \le 2 \sup_{x \in \mathbb{R}_X} |\hat{F}(x)-F(x)|\\
&~ = O_p(n^{-\alpha}),
\end{split}
\]
where the last equality is based on the extended Glivenko-Cantelli lemma (in \cite{Fabian1985}) with $\alpha$ defined above.

For the last term $A_3$, we have $|A_3|=O_p(n^{-\alpha})$ with any $\alpha \le 1/2$ by Lemma A.4 of \cite{Cui2015}. To sum up, $|MV_n(\bbeta^{\T}\bX|Y)-MV(\bbeta^{\T}\bX|Y)|= |A_1+A_2+A_3|\le O_p(n^{-\alpha(1-\delta)}) + O_p(n^{-\alpha})=O_p(n^{-\alpha(1-\delta)})$. Thus, we complete the proof.
\end{proof}

Here, we give a proof for the $d=2$ case of Theorem \ref{t2}.

\begin{proof}
We first prove $\widehat\bbeta_1 \to_{p} \bbeta_{01}$.

From Lemma \ref{l1}, we obtain that for any $\bbeta_1 \in B(\kappa_1)$ with $\kappa_1 \in (0, \kappa_{01}]$, $MV_n(\bbeta_1^{\T}\bX|Y) \to_{p} MV(\bbeta_1^{\T}\bX|Y)$. For any $\epsilon >0$, let $\{\bbeta^{1},\dots,\bbeta^{M}\}$ be an $\epsilon/\sqrt{p}$-net of $B(\kappa_{01})$ with $M=(2\kappa_{01}\sqrt{p}/\epsilon+1)^p$. Since $M$ is fixed, by Lemma \ref{l1} we obtain
\[
\max_{1 \le j \le M}|MV_n(\bbeta^j)-MV(\bbeta^j)|\to_p 0,
\]
as $n \to \infty$. For any $\bbeta \in B(\kappa_{01})$, there exists a $m \in \{1,2,\dots,M\}$ such that $\|\bbeta-\bbeta^{m}\|\le \epsilon/\sqrt{p}$. Then by Conditions $(4)$ and $(5)$, $|MV_n(\bbeta)-MV_n(\bbeta^m)|\le 1/n\sum_{i=1}^n |K_h|_{\infty}\|\bX_i/\sqrt{p}\|\epsilon \to_p |K_h|_{\infty}\E\|\bX/\sqrt{p}\|\epsilon$ and $|MV(\bbeta)-MV(\bbeta^m)|\to_p 0$. Therefore, combining the above three equations, we obtain
\[
\sup_{\bbeta \in B(\kappa_{01})} |MV_n(\bbeta)-MV(\bbeta)|\to_p 0.
\]
Then by Condition $(2.a)$, it is easy to see that for any sufficiently small $\kappa_1>0$,
\[
\mathbb{P}(\sup_{\bbeta_1 \in \partial B(\kappa_1)\cap \Gamma_1} MV_n(\bbeta_1^{\T}\bX|Y)\le MV_n(\bbeta_{01}^{\T}\bX|Y)) \to 1
\]
as $n \to \infty$. Thus, there exists a local maximum local point $\widehat\bbeta_1 \in \partial B(\kappa_1)\cap \Gamma_1$ with probability approaching to $1$ which means
$\mathbb{P}(\|\widehat\bbeta_1 - \bbeta_{01}\| < \kappa_1) \to 1$.

Next, Let $p$ be diverging with $n$ and $p^{p/2} n^{-\alpha(1-\delta)}=o(1)$. Note that Lemma \ref{l1} still holds when $p$ is diverging, because the dimension of $\bbeta^{\T}\bX$ stays to be $1$ whether the dimension of $\bX$ diverges or not. Recall that $|MV_n(\bbeta^{\T}\bX|Y)-MV(\bbeta^{\T}\bX|Y)|=O_p(n^{-\alpha(1-\delta)})$ (See Lemma \ref{l1}). Then
\[
\begin{split}
&~\max_{1 \le i \le M} |MV_n(\bbeta_j)-MV(\bbeta_j)| \\
&~ \le \sum_{i=1}^M |MV_n(\bbeta_j)-MV(\bbeta_j)|\\
&~= O_p(Mn^{-\alpha(1-\delta)})\\
&~= o_p(1).
\end{split}
\]
By the same arguments for the fixed $p$ case, we can complete the first part of the proof.

Noting that $\widehat\bbeta_1 \to_{p} \bbeta_{01}$ and Condition $(2.b)$, we can easily obtain $\widehat\bbeta_2 \to_{p} \bbeta_{02}$ by the same arguments given above.

\end{proof}

\subsection{Proof of Theorem \ref{t3}}


\begin{lemma}\label{l2}
Under Condition $(1)$, for any $\bbeta \in \mathbb{C}^p$, we have\\
$(1)$ if $R$ is fixed, $MV_n(\bbeta^{\T}\bX|Y) \to MV(\bbeta^{\T}\bX|Y)$ in probability as $n \to \infty$.\\
$(2)$ if $R$ is diverging with $n$ and satisfies $R = O(n^{\delta})$ with $\delta \le 1/2$,
$MV_n(\bbeta^{\T}\bX|Y) \to MV(\bbeta^{\T}\bX|Y)$ in probability as $n \to \infty$.
\end{lemma}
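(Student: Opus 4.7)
The plan is to derive Lemma 2 from Lemma 1 by an analytic-continuation argument combined with Vitali's convergence theorem for analytic functions. By Condition $(4)$, $\bbeta \mapsto MV(\bbeta^{\T}\bX|Y)$ is analytic and uniformly bounded on the complex ball $C(\kappa_{01})$. Provided the kernel $K$ is chosen to be analytic (the standard Gaussian choice suffices), the estimators $\hat{F}_h$ and $\hat{F}_{hr}$ extend to entire functions of their argument, so $\bbeta \mapsto MV_n(\bbeta^{\T}\bX|Y)$ is itself a data-dependent analytic function on $C(\kappa_{01})$. Both sides of the claimed convergence are thus analytic functions of $\bbeta \in \mathbb{C}^p$ on the same complex ball, and convergence on the real interior $B(\kappa_{01})$ is already supplied by Lemma 1.

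The execution proceeds as follows. First, I would invoke Lemma 1 to obtain pointwise-in-probability convergence $MV_n(\bbeta^{\T}\bX|Y) \to MV(\bbeta^{\T}\bX|Y)$ for every $\bbeta \in B(\kappa_{01}) \subseteq \mathbb{R}^p$. Second, by the standard subsequence principle (convergence in probability is equivalent to the property that every subsequence admits a further almost-surely convergent subsequence), I can pass, along a subsequence, to almost-sure convergence simultaneously on a countable dense subset $D \subseteq B(\kappa_{01})$. Third, I would establish uniform boundedness of $\sup_{\bbeta \in C(\kappa_{01})}|MV_n(\bbeta^{\T}\bX|Y)|$ on an event of probability tending to one. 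Fourth, on this high-probability event, Vitali's theorem applies: uniform boundedness on $C(\kappa_{01})$ combined with a.s. pointwise convergence on a set $D$ with accumulation points forces uniform convergence of $MV_n \to MV$ on compact subsets of $C(\kappa_{01})$, hence pointwise convergence at every complex $\bbeta$. Undoing the subsequence step returns convergence in probability at every $\bbeta \in \mathbb{C}^p$ in the complex ball. The diverging-$R$ case in part $(2)$ is handled identically, since the real-valued Lemma 1 already delivers the $O_p(n^{-\alpha(1-\delta)})$ rate on the real ball and the Vitali argument is insensitive to whether $R$ is fixed or growing.

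The principal obstacle is the uniform boundedness of $MV_n$ on the complex ball. On the real axis, $|K_h|_\infty < \infty$ from Condition $(3)$ immediately controls the integrals defining $\hat{F}_h$ and $\hat{F}_{hr}$; but the analytic extension of $K_h$ may grow rapidly in the imaginary direction --- for the Gaussian kernel, $|K_h(x+iy)|$ grows like $\exp(y^2/(2h^2))$. The remedy is to choose $\kappa_{01}$ small enough relative to the bandwidth $h$ so that the image $\{\bbeta^{\T}\bX_i : \bbeta \in C(\kappa_{01})\}$ stays in a thin horizontal strip on which $|K_h|$ remains uniformly bounded, on the high-probability event $\{\max_i \|\bX_i\|/\sqrt{p} \le M_n\}$ controlled via the moment bound $\E\|\bX/\sqrt{p}\| < \infty$ in Condition $(5)$. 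With this uniform bound secured, the three-term decomposition $A_1 + A_2 + A_3$ used in the proof of Lemma 1 carries over verbatim to the complex setting with absolute values replaced by complex moduli, closing the Vitali argument and establishing both parts of the lemma.
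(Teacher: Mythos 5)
Your route is genuinely different from the paper's, and it has a gap that I do not think can be repaired. The paper does not use analytic continuation here at all: it defines the distribution function of the complex variable $\bbeta^{\T}\bX$ as the joint CDF of its real and imaginary parts viewed as a random vector in $\mathbb{R}^2$, replaces the univariate uniform consistency result of \cite{Cheng2017} by the bivariate kernel--CDF result (Theorem 3 of \cite{LiuYang2008}) to get $\sup_{\bz}|\hat F_h(\bz)-F(\bz)|=O_{a.s.}(n^{-1/2}(\log n)^{1/2})$, and then reruns the $A_1+A_2+A_3$ decomposition of Lemma \ref{l1} verbatim. No complex analysis is involved, and the conclusion holds for every $\bbeta\in\mathbb{C}^p$, not just on a small ball.

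The fatal problem with your Vitali argument is the step you yourself flag: uniform boundedness of $MV_n$ on a \emph{fixed} complex ball. For the Gaussian kernel, $|K_h(x+iy)|\asymp h^{-1}\exp\{(y^2-x^2)/(2h^2)\}$, so controlling $\hat F_h$ at the complex point $\bbeta^{\T}\bX_i$ requires $\|\mathrm{Im}(\bbeta)\|\cdot\|\bX_i\|\lesssim h_n$. Since $h_n\to 0$, your remedy of shrinking $\kappa_{01}$ relative to $h$ forces the domain of boundedness to collapse onto the real slice as $n\to\infty$; Vitali's theorem on a shrinking domain says nothing about any fixed $\bbeta$ with nonzero imaginary part, which is exactly what the lemma (and its use in the contour integrals of Theorem \ref{t3}) requires. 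Compactly supported kernels are worse still, as they admit no analytic extension. A second, independent objection is that even where Vitali applies it identifies the limit as the analytic continuation of the real-$\bbeta$ map, which is generically complex-valued off the real axis, whereas the object the paper's Lemma \ref{l2} is actually about --- the MV index built from the bivariate CDF of $(\mathrm{Re}\,\bbeta^{\T}\bX,\mathrm{Im}\,\bbeta^{\T}\bX)$ --- is real-valued and therefore cannot equal a non-constant analytic function. So even with boundedness repaired, your argument would establish convergence to a different limit than the one the lemma asserts. The direct bivariate-CDF reduction used in the paper avoids both difficulties.
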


\begin{proof}
The proof is similar to that of Lemma \ref{l1}. The only difference comes from the complex $\bbeta$ and the corresponding complex random variable $X=:\bbeta^{\T}\bX$ with support $\mathbb{C}_X$. Denote $X=:a+ib$ where $\bZ:=(a,b)^{\T}$ is a real random vector. By the definition of the cumulative distribution function of a complex variable, that is, the joint distribution function of the real part and the imaginary part of the variable, under several mild conditions we obtain
\[
\sup_{\bZ\in\mathbb{R}^2} |\hat{F}_{h}(\bz)-F(\bz)|= O_{a.s.}(n^{-1/2}(\log{n})^{1/2})\to 0.
\]
This convergence rate comes from Theorem 3 of \cite{LiuYang2008}. Then, by the same arguments of Lemma \ref{l1}, we complete the proof.
\end{proof}

The following is the proof of Theorem \ref{t3}.

\begin{proof}
We first present the proof for $i=1$. For simplicity, we omit $i=1,2$ in the subscripts and superscripts. 
Denote $\btheta=(\bbeta^{\T}, \lambda)^{\T} \in \mathbb{R}^{p+1}$. Let $\widehat\btheta = (\widehat\bbeta^\T, \hat\lambda)^\T$ be the maximizer of $L_{nh}(\btheta)$, then $\widehat\btheta = (\widehat\bbeta^\T, \hat\lambda)^\T$ is a stationary point of $L_{nh}(\btheta)$, that is, $L_{nh}^{'}(\widehat\btheta)=0$. Similarly, let $\btheta_0 = (\bbeta_0^\T, \lambda_0)^\T$ be the maximizer of $L(\btheta)$, then $\btheta_0 = (\bbeta_0^\T, \lambda_0)^\T$ is a stationary point of $L(\btheta)$. Denote $MV(\bbeta) = MV(\bbeta^{\T}\bX|Y)$ and $MV_n(\bbeta) = MV_n(\bbeta^{\T}\bX|Y)$ for the simplicity of notation.

We first prove $n^{1/2}(\widehat\btheta - \btheta_0) \to_{d} N(0,V)$ where the covariance matrix $V$ will be given in the proof below. By the Taylor expansion, we have
\begin{equation} \label{e1}
0=L_{nh}^{'}(\widehat\btheta)=L_{nh}^{'}(\btheta_0) + L_{nh}^{''}(\btheta_0)(\widehat\btheta-\btheta_0)+R(\btheta^{*}),
\end{equation}
where $\btheta^{*}$ satisfies $\|\btheta^{*}-\btheta_0\|\le\|\widehat\btheta-\btheta_0\|$ with $\|\cdot\|$ being the Frobenius norm and $\btheta^{*}=(\bbeta^{*\T},\lambda^{*})^{\T}$. With regular calculation, we obtain
\[
L_{nh}^{'}(\btheta_0)=
\left(\begin{array}{c}
MV_n^{'}(\bbeta_0)+2\lambda_0\bbeta_0\\
\bbeta_0^{\T}\bbeta_0-1
\end{array}\right)
\]
and
\[
L_{nh}^{''}(\btheta_0)=
\left(\begin{array}{cc}
MV_n^{''}(\bbeta_0)+2\lambda_0 I_p & 2\bbeta_0\\
2\bbeta_0^{\T} & 0
\end{array}\right)
\]
where $I_p$ denotes the identity matrix of dimension $p \times p$. The remainder term $R(\btheta^{*})$ contains the third derivative of $L_{nh}(\btheta)$ at $\btheta=\btheta^{*}$. Let $T_n=L_{nh}^{'''}(\btheta^{*})$, where $T_n$ is an array of dimension $(p+1)\times(p+1)\times(p+1)$ and for each $j=1,\dots,(p+1)$ and $T_n(j,;,;)$ is a matrix of dimension $(p+1)\times(p+1)$. Hence, we can write
\[
R(\btheta^{*})=\frac{1}{2}\left(\begin{array}{c}
(\widehat\btheta-\btheta_0)^{\T}T_n(1,;,;)(\widehat\btheta-\btheta_0)\\
(\widehat\btheta-\btheta_0)^{\T}T_n(2,;,;)(\widehat\btheta-\btheta_0)\\
\cdot\\
\cdot\\
\cdot\\
(\widehat\btheta-\btheta_0)^{\T}T_n(p+1,;,;)(\widehat\btheta-\btheta_0)
\end{array}\right)
\]
Then, based on the explicit expressions of the derivatives given above and (\ref{e1}), we obtain
\begin{equation}\label{e2}
\begin{split}
&~-\left(\begin{array}{cc}
MV_n^{''}(\bbeta_0)+2\lambda_0 I_p & 2\bbeta_0\\
2\bbeta_0^{\T} & 0
\end{array}\right)^{-1} \times \sqrt{n}\left(\begin{array}{c}
MV_n^{'}(\bbeta_0)+2\lambda_0\bbeta_0\\
\bbeta_0^{\T}\bbeta_0-1
\end{array}\right)= \\
&~[I_{p+1}+\frac{1}{2} \left(\begin{array}{cc}
MV_n^{''}(\bbeta_0)+2\lambda_0 I_p & 2\bbeta_0\\
2\bbeta_0^{\T} & 0
\end{array}\right)^{-1}\times\left(\begin{array}{c}
(\widehat\btheta-\btheta_0)^{\T}T_n(1,;,;)\\
(\widehat\btheta-\btheta_0)^{\T}T_n(2,;,;)\\
\cdot\\
\cdot\\
\cdot\\
(\widehat\btheta-\btheta_0)^{\T}T_n(p+1,;,;)
\end{array}\right)]\sqrt{n}(\widehat\btheta-\btheta_0).
\end{split}
\end{equation}
Next, our proof is divided into two parts:\\
Part 1:
\[
\left(\begin{array}{cc}
MV_n^{''}(\bbeta_0)+2\lambda_0 I_p & 2\bbeta_0\\
2\bbeta_0^{\T} & 0
\end{array}\right)^{-1} \times \sqrt{n}\left(\begin{array}{c}
MV_n^{'}(\bbeta_0)+2\lambda_0\bbeta_0\\
\bbeta_0^{\T}\bbeta_0-1
\end{array}\right)\to N(0,V).
\]
Part 2:
\[
[I_{p+1}+\frac{1}{2} \left(\begin{array}{cc}
MV_n^{''}(\bbeta_0)+2\lambda_0 I_p & 2\bbeta_0\\
2\bbeta_0^{\T} & 0
\end{array}\right)^{-1}\times\left(\begin{array}{c}
(\widehat\btheta-\btheta_0)^{\T}T_n(1,;,;)\\
(\widehat\btheta-\btheta_0)^{\T}T_n(2,;,;)\\
\cdot\\
\cdot\\
\cdot\\
(\widehat\btheta-\btheta_0)^{\T}T_n(p+1,;,;)
\end{array}\right)]\sqrt{n}(\widehat\btheta-\btheta_0)=_{d} \sqrt{n}(\widehat\btheta-\btheta_0),
\]
where $=_d$ indicates convergence in distribution.

{\bf Proof of Part 1.} If $\btheta=(\bbeta^{\T},\lambda)^{\T}$ is a complex vector, the above arguments still stand. By Condition $(4)$ and Lemma \ref{l2}, it follows from the convergence of analytic functions that
\[
L_{nh}^{''}(\btheta_0)^{-1}=
\left(\begin{array}{cc}
MV_n^{''}(\bbeta_0)+2\lambda_0 I_p & 2\bbeta_0\\
2\bbeta_0^{\T} & 0
\end{array}\right)^{-1} \to_p L^{''}(\btheta_0)^{-1}=:A.
\]
Denote $\bbeta=(b_1,\dots,b_p)^{\T}$ and $T_{jn}=\sqrt{n}\partial{MV_n(\bbeta)}/\partial{b_j}|_{\bbeta=\bbeta_0}$. Note that $\bbeta_0^{T}\bbeta_0-1=0$, 
then it suffices to prove the asymptotic normality of $T_{jn}$ for each $j=1,\dots,p$. It is worth noting the elements of $-A\sqrt{n}L_{nh}^{'}(\btheta_0)$ are linear combinations of $T_{jn}$s. We consider the $j=1$ case.

By the definition of $MV_n(\bbeta)$, we have
\[
T_{1n}=-\sqrt{n} [\frac{1}{n}\sum_{i=1}^{n}\frac{\partial}{\partial{b_1}}\{\sum_{r=1}^{R}\hat{p}_r
[\hat{F}_{hr}(\bbeta^{\T}\bX_i)-\hat{F}_h(\bbeta^{\T}\bX_i)]^2\}]_{\bbeta=\bbeta_0}.
\]
By Lemma \ref{l2},
\[
\sum_{i=1}^{n}\{\sum_{r=1}^{R}\hat{p}_r
[\hat{F}_{hr}(\bbeta^{\T}\bX_i)-\hat{F}_h(\bbeta^{\T}\bX_i)]^2\}=\sum_{i=1}^{n}\{\sum_{r=1}^{R}p_r
[F_{r}(\bbeta^{\T}\bX_i)-F(\bbeta^{\T}\bX_i)]^2\}(1+u_n(\bbeta)+iv_n(\bbeta))
\]
where $\bbeta \in C(\kappa_{0})$, $i^2=-1$, and $u_n(\bbeta)+iv_n(\bbeta)=o_p(1)$ is uniform in $\bbeta\in C(\kappa_{0})$ when $n \to \infty$ with $u_n(\bbeta)$ and $v_n(\bbeta)$ being real functions of $\bbeta$. By Cauchy's residue theorem, we have
\[
T_{1n}=\frac{1}{\sqrt{n}}\frac{1}{2\pi i}\oint_{C_1}\frac{\sum_{i=1}^{n}\sum_{r=1}^{R}\hat{p}_r
[\hat{F}_{hr}(\tilde\bbeta^{\T}\bX_i)-\hat{F}_h(\tilde\bbeta^{\T}\bX_i)]^2}{(b_1-b_{01})^2}db_1,
\]
where $\tilde\bbeta=(b_1,b_{02},\dots,b_{0p})^{\T}$ with $\bbeta_0=(b_{01},\dots,b_{0p})^\T$, and $C_1$ satisfies $\{b_1 \in \mathbb{C}:\|b_1-b_{01}\|=r\}$ with $r<\kappa_0$. Define
\[
S_{1n}=\frac{1}{\sqrt{n}}\frac{1}{2\pi i}\oint_{C_1}\frac{\sum_{i=1}^{n}\sum_{r=1}^{R}p_r
[F_{r}(\tilde\bbeta^{\T}\bX_i)-F(\tilde\bbeta^{\T}\bX_i)]^2}{(b_1-b_{01})^2}db_1.
\]
Then,
\be \label{e3}
T_{1n}-S_{1n}=\frac{1}{\sqrt{n}}\frac{1}{2\pi i}\oint_{C_1}\frac{\sum_{i=1}^{n}\sum_{r=1}^{R}p_r
[F_{r}(\tilde\bbeta^{\T}\bX_i)-F(\tilde\bbeta^{\T}\bX_i)]^2(u_n(\tilde\bbeta)+iv_n(\tilde\bbeta))}{(b_1-b_{01})^2}db_1.
\ee
Let
\[
\frac{1}{\sqrt{n}}{\sum_{i=1}^{n}\sum_{r=1}^{R}p_r
[F_{r}(\tilde\bbeta^{\T}\bX_i)-F(\tilde\bbeta^{\T}\bX_i)]^2}=: R_n(b_1)+iI_n(b_1).
\]
Noting that the left-hand side of (\ref{e3}) is real, we consider the real part of the other hand, that is
\be \label{e4}
\frac{1}{2\pi}\int_{0}^{2\pi}\frac{(R_n \cos\mu+I_n \sin \mu)u_n+(R_n \sin\mu-I_n \cos\mu)v_n}{r}d\mu,
\ee
where the arguments of $R_n,I_n,u_n,v_n$ are $b_{01}+re^{i\mu}$. By the mean value theorem, we obtain
\be \label{e5}
(\ref{e4})=\frac{(R_{0n} \cos\mu_0+I_{0n} \sin \mu_0)u_{0n}+(R_{0n} \sin\mu_0-I_{0n} \cos\mu_0)v_{0n}}{r},
\ee
where the arguments of $R_{0n},I_{0n},u_{0n},v_{0n}$ are all $b_{01}+re^{i\mu_0}$ with $\mu_0 \in [0,2\pi]$. By the definition of $S_{1n}$, we have
\be \label{e6}
S_{1n}=-\frac{1}{\sqrt{n}}\sum_{i=1}^{n}\frac{\partial}{\partial b_1}\sum_{r=1}^{R}p_r
[F_{r}(\bbeta_0^{\T}\bX_i)-F(\bbeta_0^{\T}\bX_i)]^2=-\frac{\partial}{\partial b_1}(R_n(b_{01})+i I_n(b_{01})).
\ee
By central limit theorem, $S_{1n}$ is asymptotically normally distributed. 
Then, noticing $u_n(\bbeta)+iv_n(\bbeta)=o_p(1)$ for $\bbeta \in C(\kappa_0)$ and letting $r \to 0$, $|T_{1n}-S_{1n}|\le 2/r (|R_{0n}|+|I_{0n}|)(|u_{0n}|+|v_{0n}|)=o_p(1)$ as $n \to \infty$. Using Slutsky's theorem, we complete the proof for the asymptotic normality of $T_{1n}$ as well as Part 1.

{\bf Proof of Part 2.} By Condition $(4)$--$(5)$, we can easily obtain
\[
I_{p+1}+\frac{1}{2} \left(\begin{array}{cc}
MV_n^{''}(\bbeta_0)+2\lambda_0 I_p & 2\bbeta_0\\
2\bbeta_0^{\T} & 0
\end{array}\right)^{-1}\times\left(\begin{array}{c}
(\widehat\btheta-\btheta_0)^{\T}T_n(1,;,;)\\
(\widehat\btheta-\btheta_0)^{\T}T_n(2,;,;)\\
\cdot\\
\cdot\\
\cdot\\
(\widehat\btheta-\btheta_0)^{\T}T_n(p+1,;,;)
\end{array}\right)\to_{p}I_{p+1}.
\]
Hence, by Slutsky's theorem,
\[
[I_{p+1}+\frac{1}{2} \left(\begin{array}{cc}
MV_n^{''}(\bbeta_0)+2\lambda_0 I_p & 2\bbeta_0\\
2\bbeta_0^{\T} & 0
\end{array}\right)^{-1}\times\left(\begin{array}{c}
(\widehat\btheta-\btheta_0)^{\T}T_n(1,;,;)\\
(\widehat\btheta-\btheta_0)^{\T}T_n(2,;,;)\\
\cdot\\
\cdot\\
\cdot\\
(\widehat\btheta-\btheta_0)^{\T}T_n(p+1,;,;)
\end{array}\right)]\sqrt{n}(\widehat\btheta-\btheta_0)=_{d} \sqrt{n}(\widehat\btheta-\btheta_0).
\]

Therefore, combining Part 1 and Part 2, we have $\sqrt{n}(\widehat\btheta-\btheta_0)\to_d N(0,V)$. Because $\widehat\btheta=(\widehat\bbeta^{\T},\hat\lambda)^{\T}$, by the property of multivariate normal distribution, we complete the proof for the case $i=1$ with the covariance matrix $V_1$ being the $p \times p$ sub-matrix at the top right-hand corner of $V$.

For the case $i=2$, the proof is a direct extension of the $i=1$ case. We avoid presenting it here.
\end{proof}

\subsection{Proof of Theorem \ref{t4}}

\begin{proof}
It's enough to show that for any given $\varepsilon$ there exists a positive constant $C$ such that, for $n$ large enough, the following inequality holds:
\be \label{e7}
\mathbb{P}\{\sup_{\|\bu\|=C} L_{nh}(\btheta_0+\alpha_n\bu)<L_{nh}(\btheta_0)\} \ge 1-\varepsilon,
\ee
where $\btheta_0$ is the maximizer of $L(\btheta)=MV(\bbeta)+\lambda(\bbeta^\T \bbeta-1)$. This means that with probability tending to $1$ there exists a local maximum $\hat\btheta$ in the ball $\{\btheta_0+\alpha_n\bu:\|\bu\|\le C\}$ so that $\|\hat\btheta-\btheta_0\|=O_p(\alpha_n)$. This implies $\|\hat\bbeta-\bbeta_0\|=O_p(\alpha_n)$. Then,
\[
\begin{split}
&~L_{nh}(\btheta_0+\alpha_n\bu)-L_{nh}(\btheta_0)\\
&=\nabla^\T L_{nh}(\btheta_0)\bu \alpha_n + \frac{1}{2}\bu^\T \nabla^2 L_{nh}(\btheta_0)\bu\alpha_n^2+\frac{1}{6}\nabla^\T\{\bu^\T\nabla^2 L_{nh}(\btheta^*)\bu\}\bu\alpha_n^3\\
&=:B_1+B_2+B_3,
\end{split}
\]
where $\btheta^*$ lies between $\btheta_0+\alpha_n\bu$ and $\btheta_0$.

By Condition $(6)$,
\[
\begin{split}
|B_1|&=|\nabla^\T L_{nh}(\btheta_0)\bu \alpha_n|\le \alpha_n\|\nabla^\T L_{nh}(\btheta_0)\|\|\bu\|\\
&=\alpha_n\|\nabla^\T L_{nh}(\btheta_0)-\nabla^\T L(\btheta_0)\|\|\bu\|\\
&=\alpha_n\|\nabla^\T MV_n(\bbeta_0)-\nabla^\T MV(\bbeta_0)\|\|\bu\|=O_p(\alpha_n^2)\|\bu\|.
\end{split}
\]
For $A_2$, by Condition (7)
\[
\begin{split}
B_2&=\frac{1}{2}\bu^\T \nabla^2 L(\btheta_0)\bu\alpha_n^2+\frac{1}{2}\bu^\T [\nabla^2 L_{nh}(\btheta_0)-\nabla^2 L(\btheta_0)]\bu\alpha_n^2\\
&=-\frac{1}{2}\alpha_n^2|\bu^\T \nabla^2 L(\btheta_0)\bu|+\frac{1}{2}\bu^\T [\nabla^2 MV_n(\bbeta_0)-\nabla^2 MV(\bbeta_0)]\bu\alpha_n^2\\
&=-\frac{1}{2}\alpha_n^2|\bu^\T \nabla^2 L(\btheta_0)\bu|+\frac{1}{2}o_p(1)\alpha_n^2\|\bu\|^2.
\end{split}
\]
By the Cauchy-Schwarz inequality and Conditions $(6)$ and $(8)$,
\[
\begin{split}
|B_3|&=|\frac{1}{6}\sum_{i,j,k=1}^p \frac{\partial L(\btheta^*)}{\partial\theta_i \partial\theta_j \partial\theta_k}u_i u_j u_k \alpha_n^3|\\
&~~~+\frac{1}{6}\nabla^\T\{\bu^\T[\nabla^2 L_{nh}(\btheta^*)-\nabla^2 L(\btheta^*)]\bu\}\bu\alpha_n^3\\
&\le \frac{1}{6} \{\sum_{i,j,k=1}^p M_1^2 \}^{1/2}\|\bu\|^3\alpha_n^3 +\frac{1}{6} \{\sum_{i,j,k=1}^p M_2^2 \}^{1/2}\|\bu\|^3\alpha_n^3\\
&=o_p(\alpha_n^2)\|\bu\|^3.
\end{split}
\]
Allowing $\|\bu\|$ to be large enough, $B_1$ and $B_3$ are dominated by $B_2$ which is less than $0$. This proves (\ref{e7}).

\end{proof}


\end{document}